\documentclass[aps,prl,reprint,amsmath,amssymb,longbibliography]{revtex4-1}
\usepackage[T1]{fontenc}
\usepackage{lmodern}
\usepackage{amsthm}
\usepackage{needspace}
\usepackage{bm}
\usepackage{hyperref}
\hypersetup{
  colorlinks=false,
  pdfborder={0 0 0}
}

\newcommand{\Tr}{\operatorname{Tr}}
\newcommand{\rank}{\operatorname{rank}}
\newcommand{\ran}{\operatorname{ran}}
\newcommand{\supp}{\operatorname{supp}}
\newcommand{\Span}{\operatorname{span}}
\newcommand{\SEP}{\mathrm{SEP}}
\newcommand{\LOCC}{\mathrm{LOCC}}
\newcommand{\LU}{\mathrm{LU}}

\newcommand{\HS}{\mathrm{HS}}
\newcommand{\ket}[1]{|#1\rangle}
\newcommand{\bra}[1]{\langle#1|}
\newcommand{\proj}[1]{|#1\rangle\langle#1|}
\newcommand{\lueq}{\mathrel{\sim_{\LU}}}
\newcommand{\rev}{\mathrel{\longleftrightarrow_{\LOCC}}}

\newtheorem{theorem}{Theorem}
\newtheorem{proposition}{Proposition}

\newtheorem{emlemma}{Lemma}

\newtheorem{smlemma}{Lemma}

\begin{document}

\title{Complete Classification of Reversible Entanglement Transformations of Two Qutrits}
\author{Wei Song}\email{wsong1@mail.ustc.edu.cn}
\affiliation{School of Physics and Materials Engineering, Hefei Normal University,
Hefei 230601, China}

\author{Xiao-Lan Zong}\email{zxl@hfnu.edu.cn}
\affiliation{School of Physics and Materials Engineering, Hefei Normal University,
Hefei 230601, China}

\author{Ming Yang}
\affiliation{School of Physics, Anhui University, Hefei, 230601, China}
\affiliation{Leibniz International Joint Research Center of Materials Sciences of Anhui Province, Anhui University, Hefei, 230601, China}
\date{\today}

\begin{abstract}
We study single-copy exact deterministic transformations of bipartite quantum states under local operations and classical communication (LOCC), and give a complete classification of the states in two three-level systems that can be transformed into each other. This classification applies to finite protocols and to the countable protocols specified in this paper. If two entangled states can be transformed into each other but do not satisfy local unitary equivalence, then both can be decomposed into an entangled block and a separable part, and the two parts can be distinguished by local measurements. The two entangled blocks have the same probability weight and satisfy local unitary equivalence, while the separable parts may differ. We further prove that every entangled state has a unique least retained block; after retaining its probability weight, the local unitary equivalence class of this block completely determines the reversible equivalence class of the original state. For negative partial transpose states, one only needs to compute the negative spectral subspace of the partial transpose and perform at most three matrix zero tests to determine this retained block.
\end{abstract}

\maketitle

Local operations and classical communication (LOCC) specify the allowed transformations between bipartite quantum states. Two states are said to be mutually interconvertible if each can be transformed into the other exactly and deterministically. For pure states, Nielsen's majorization theorem shows that two states are mutually interconvertible if and only if they are locally unitary (LU) equivalent~\cite{Nielsen}. Therefore, reversible transformations of pure states preserve their Schmidt coefficients.

For mixed states, however, mutually interconvertible states need not be LU equivalent. In some cases, the entangled and separable components can be distinguished locally without disturbing the entangled component. In this case, both parties can retain the entangled part and replace the separable part by local preparation. As long as the target state also has the corresponding flag structure, the original state can be recovered in the same way. Local orthogonality and classical flags provide known examples of such reversible transformations~\cite{LocalOrthogonality,Flags}. However, these examples do not show whether all reversible transformations can be reduced to this mechanism.

The problem of transforming general mixed states is more complex than that for pure states. Gour's results show that for ensemble transformations, and for mixed-state transformations with local dimensions at least four, finitely many conditions are insufficient to give a general characterization~\cite{Gour2005}. Recently, de Vicente studied the relation between mutual interconvertibility and LU equivalence, gave examples of non-unitary reversible transformations, and proved that for states with full entanglement dimensionality, reversible transformations can occur only between LU-equivalent states~\cite{deVicente}. However, in two three-level systems, this result does not yet cover all states with Schmidt number two. Such states can have reduced density matrices of full rank on both sides; hence full local rank alone is not sufficient to draw the above conclusion.

In this paper we prove that in two three-level systems, every pair of reversible states that cannot be related by a whole-state LU transformation has a local flag structure. Specifically, each state can be decomposed into an entangled block and a separable part that can be distinguished by local measurements; the entangled blocks of the two states have the same probability weight and are LU equivalent. The main difficulty of the proof is to show the necessity of this structure. Once this structure is present, bidirectional conversion can be achieved simply by measuring the flag and re-preparing the separable part.

Our proof first selects a best separable approximation compatible with the two conversion channels and proves the rigidity of the entangled residual state obtained after removing the separable part. We then examine the coherence between the product subspace protected by the residual state and its orthogonal complement, thereby determining the part of the original state that must be retained together with the entangled part. The proof requires only that the average output of the protocol equal the target state, not that every measurement branch yield that state.

We prove that every entangled two-qutrit state has a unique least retained block, and that mutual interconvertibility can be determined by comparing these blocks. For negative partial transpose (NPT) states, the least retained block can be determined by spectral analysis and matrix zero tests, after which one only needs to decide whether the two retained blocks are LU equivalent. For general positive partial transpose (PPT) states, the separability testing problem still has to be solved separately. This criterion applies to bidirectional transformations and does not solve the general one-way transformation problem.

\emph{Notation and protocol convention.---}
Throughout this paper, we consider the state space
\(\mathcal H=\mathbb C^3_A\otimes\mathbb C^3_B\).
If there exist completely positive and trace-preserving (CPTP) LOCC channels \(\Lambda\) and \(\Psi\)
such that \(\Lambda(\rho)=\sigma\) and \(\Psi(\sigma)=\rho\),
we write \(\rho\rev\sigma\). The above equalities are all understood in terms of the average output of the protocol.
We allow both finite protocols and countable protocols along the same fixed LOCC protocol tree,
but require that the channels corresponding to their finite prefixes converge in diamond norm~\cite{LOCC}.
The precise definition is given in Sec.~\ref{sec:protocol}.
This convention does not include all channels in the closure of LOCC.
The two parties may use local auxiliary systems, but do not share an entangled catalyst state in advance.
In the following, we denote by \(\Gamma=T_A\) the partial transpose on Alice's subsystem,
and by \(N(\rho)=\Tr(\rho^\Gamma)_-\) the negativity,
where \((\rho^\Gamma)_-\) denotes the negative part of the partial transpose.

\Needspace{5\baselineskip}\emph{Classification of mutually convertible states.---}

\begin{theorem}\label{thm:classification}
For two-qutrit states \(\rho\) and \(\sigma\),
\(\rho\rev\sigma\) if and only if at least one of the following cases holds.
\begin{enumerate}
\item Both \(\rho\) and \(\sigma\) are separable states.

\item \(\rho\) and \(\sigma\) are locally unitary equivalent, i.e., there exists
\(W=U_A\otimes U_B\) such that \(\sigma=W\rho W^\dagger\).

\item There exist two-dimensional subspaces \(U,U'\subset\mathbb C^3_A\),
unit vectors \(n,n'\) orthogonal to them, respectively,
and a weight \(0<w<1\), such that
\begin{align}
\rho&=w\eta+(1-w)\proj n\otimes\tau_B,\nonumber\\
\sigma&=w\eta'+(1-w)\proj{n'}\otimes\tau'_B.
\label{eq:oneflag}
\end{align}
Here, \(\eta\) and \(\eta'\) are locally unitary equivalent entangled states
supported on \(U\otimes\mathbb C^3\) and
\(U'\otimes\mathbb C^3\), respectively;
\(\tau_B\) and \(\tau'_B\) can be arbitrary quantum states.
The case in which Alice and Bob are exchanged is also included.

\item There exist product subspaces \(L=U\otimes V\) and \(L'=U'\otimes V'\),
where \(U,U',V,V'\) are all two-dimensional local subspaces, such that
\begin{align}
\rho&=w\eta+(1-w)\tau,\nonumber\\
\sigma&=w\eta'+(1-w)\tau',\qquad 0<w<1.
\label{eq:twoflag}
\end{align}
Here, \(\eta\) and \(\eta'\) are locally unitary equivalent entangled states
supported on \(L\) and \(L'\), respectively;
\(\tau\) and \(\tau'\) are separable states
supported on \(L^\perp\) and \((L')^\perp\), respectively.
\end{enumerate}
\end{theorem}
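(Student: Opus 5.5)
The proof splits into sufficiency, which is constructive, and necessity, which carries almost all the weight.

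\emph{Sufficiency.} Case 1 is standard: any separable state can be prepared by LOCC after discarding the input. Case 2 is a pair of local unitaries. In case 3, Alice measures $\{P_U,\proj n\}$, where $P_U$ is the projector onto $U$. This measurement leaves $\eta$ untouched, because $\eta$ is supported on $U\otimes\mathbb C^3$. It also does not disturb the mixture, since the off-diagonal blocks between $U\otimes\mathbb C^3$ and $n\otimes\mathbb C^3$ vanish in~\eqref{eq:oneflag}. On outcome $U$ the parties apply the local unitary taking $\eta$ to $\eta'$, whose image lies in $U'\otimes\mathbb C^3$. On outcome $n$ Alice prepares $\proj{n'}$ and Bob prepares $\tau'_B$. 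The average output is $\sigma$, and the reverse protocol is symmetric. Case 4 works the same way. Alice and Bob measure $\{P_U,1-P_U\}$ and $\{P_V,1-P_V\}$. Since $\tau$ is supported on $L^\perp$, the outcome pair $(U,V)$ occurs with probability exactly $w$ and leaves $\eta$ intact. Here we must check that $\tau$ has no weight on $L$, and that the off-diagonal terms between $L$ and $L^\perp$ are killed by the measurement and do not matter. On the other outcomes, which they learn by communicating classically, they prepare the separable state $\tau'$, using that $\tau'$ is a separable state on $(L')^\perp$.

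\emph{Necessity.} Assume $\rho\rev\sigma$ with $\rho$ entangled. By monotonicity $\sigma$ is entangled, and the Schmidt number is preserved in both directions. If the Schmidt number is $3$, we invoke de Vicente's full-entanglement-dimensionality result to get case 2. So the remaining case is Schmidt number $2$.

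\textbf{Step (a): a rigid entangled core.} Consider the composition $\Psi\circ\Lambda$, a countable LOCC channel that fixes $\rho$. Following the outline in the introduction, choose a best separable approximation $\rho=\lambda\rho_{\rm sep}+(1-\lambda)\delta$ with $\lambda$ maximal, and make the choice compatible with both channels. Because $\lambda$ is maximal and LOCC maps separable states to separable states, the map $\Lambda$ must carry $\delta$ to a residual $\delta'$ of $\sigma$ with the same weight, and $\Psi$ must carry $\delta'$ back. Decompose the channels into Kraus branches $A_i\otimes B_i$, which is possible for separable and hence LOCC maps. For each branch, the operator $(A_i\otimes B_i)\delta(A_i\otimes B_i)^\dagger$ must be proportional to a fixed LU image of $\delta'$. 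The reason is strict monotonicity: an entanglement measure that is convex and nonincreasing, such as negativity or a Schmidt-number-$2$ robustness, can be preserved on average only if it is preserved branchwise. By a Nielsen-type argument on the edge of the Schmidt-number-$2$ face, this forces $A_i$ and $B_i$ to act as scaled isometries on the local supports $U_\delta$ and $V_\delta$ of $\delta$. Hence $\delta$ and $\delta'$ are LU equivalent.

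\textbf{Step (b): how far the flag extends.} Let $L_\delta=U_\delta\otimes V_\delta$. If both local supports are full, the Kraus operators are multiples of unitaries on all of $\mathbb C^3$. Averaging then gives case 2 for the whole state, once we show that the branch unitaries can be taken to be one common unitary, which follows from fixed-point rigidity of $\Psi\circ\Lambda$. Otherwise a local support has dimension $2$, because the core is entangled and $\dim=1$ would make it a product. Next examine the blocks of $\rho$ between $L_\delta$ and its complement. Isometric action on $U_\delta$ combined with trace preservation forces $A_i$ to map $U_\delta^\perp$ into $U_\delta'^\perp$ up to terms that must cancel. Any coherence of $\rho$ between $L_\delta$ and $L_\delta^\perp$ would then either enlarge the retained block or be destroyed, and destroying it contradicts reversibility. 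We enlarge $\eta$ to absorb this coherent part. This should yield exactly two outcomes. Either only one side is restricted, which leaves a product flag $\proj n\otimes\tau_B$ (case 3). Or both sides are restricted with a separable remainder on $L^\perp$ (case 4). The weight $w$ matches by trace preservation on the flagged branch.

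The main obstacle is step (a)'s branchwise rigidity for Schmidt-number-$2$ states with full local ranks, since this is exactly the situation de Vicente's result misses. There I would first try using $N(\rho)$, together with the negative eigenspace of $\rho^\Gamma$, as the tracked invariant for the NPT case. For the PPT case I would try an argument based on the range of the state together with product vectors. Handling the countable protocols needs only diamond-norm convergence, so that the branchwise equalities pass to the limit.
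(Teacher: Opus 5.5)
\textbf{Verdict: genuine gap.} Your overall structure matches the paper: compatible best separable approximation, rigid core, extension of protection along coherence, then local flags. Your sufficiency protocols coincide with Sec.~\ref{sec:shell}. The Schmidt-number-three reduction to de Vicente is harmless, but it does not touch the difficult cores, which have Schmidt number two.

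\textbf{Step (a) is unproven, and the mechanism you sketch fails.} Equality $M(\rho)=M(\sigma)$ for a convex strong monotone gives only no \emph{average} loss at each node, $\sum_\alpha p_\alpha M(\rho_\alpha)=M(\rho)$. It does not make individual branches preserve $M$. Still less does it make each branch output proportional to an LU image of $\delta'$, a claim the paper deliberately avoids. There is also no Nielsen-type argument for mixed Schmidt-number-two cores of full local rank. Equality of $E_F$ yields only the contact equation $P_{U_j}FP_{U_j}=q_jP_{U_j}$ on the two-dimensional Schmidt supports of an optimal decomposition. That leaves Alice effects $F=qI+D$ with $D$ of inertia $(-,0,+)$ completely open, and excluding them is the real content of the proof:
\begin{itemize}
\item negativity equality gives the Jordan transport of $(\delta^\Gamma)_\pm$ and the kernel bridge $(D^T\otimes I)S_-\subset\ker\delta^\Gamma$;
\item the auxiliary state $J(\delta)$ is carried by the conjugated protocol;
\item the joint-kernel lemma rules out $r_A(J(\delta))=3$;
\item the kernel bound with one local factor fixed rules out $r_A(J(\delta))=2$ when $\dim S_-\ge2$;
\item a single negative eigenvector of Schmidt rank two needs the separate End Matter lemma built on protected Schmidt rectangles.
\end{itemize}
Rank-two $2\times3$ cores need the Kronecker pencil normal form, which makes $\delta^\Gamma$ invertible. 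Rank-four PPT cores need the extreme-ray and finite-stabilizer results. Your ``range plus product vectors'' idea supplies none of this. In the full-local-rank case you must also show that zero-probability Kraus terms vanish on the whole space, so the channels themselves (not just their action on $\ran\delta$) are mixed-LU. Then the Hilbert--Schmidt variance identity takes you from mutual mixed-LU to LU. ``Fixed-point rigidity'' provides neither step.

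\textbf{Step (b) has further holes.}
\begin{itemize}
\item A decomposition ``compatible with both channels'' must be constructed. $\Lambda(S)$ is automatically optimal for $\sigma$, but $\Psi\Lambda(S)\neq S$ in general. The paper takes a Ces\`aro fixed point of $\Psi\Lambda$ on the compact convex set of optimal $S$.
\item Your claim that $A_i$ maps $U_\delta^\perp$ into $U_{\delta'}^\perp$ is not what holds. The Kraus operators are only block upper-triangular with respect to the protected subspace. The inflow blocks satisfy $R_iQ\rho=0$ only after weight conservation is proved from $\Lambda^\dagger(P)\ge P$ in both directions.
\item ``Coherence is either absorbed or destroyed'' must be made precise as the Koashi--Imoto extension of protection to $E=\Span\{\rho^kL\}$, with unchanged Kraus weights. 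When $E$ reaches local rank $(3,3)$, you need the determinant--trace equality to conclude global mixed-LU. That is an exit to case~2, which your dichotomy omits.
\end{itemize}
Most seriously, your case~4 simply asserts that the remainder on $L^\perp$ is separable. Take a pure core $\psi$ with $[\rho,P_L]=0$, and let $S_*=\rho-t\proj\psi$, which is separable. The exterior $(I-P_L)\rho(I-P_L)=(I-P_L)S_*(I-P_L)$ can still be entangled through coherence between the wings $U\otimes m$ and $n\otimes V$. This happens because compressing by $I-P_L$ does not preserve separability. The paper handles this by extracting a second compatible CES core $\epsilon$ on $L^\perp$. It then checks that $\tfrac12\proj\psi+\tfrac12\epsilon$ has CES range and local rank $(3,3)$, and invokes core rigidity to land in case~2. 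Without these pieces, necessity is not established.
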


The cases in Theorem~\ref{thm:classification} are not mutually exclusive;
the case $w=1$ has already been included in Case~2.
In Cases~3 and~4, the parties measure the local flags, apply the corresponding
local unitary transformation to the retained block, and re-prepare the target separable part
on the remaining outcomes; the finite protocols are given in Sec.~\ref{sec:shell}.
Exchanging the initial and target states yields the reverse protocol.
Cases~1 and~2 are realized by local preparation and local unitary transformations, respectively.
Therefore, every pair of states in the theorem can be bidirectionally converted via a finite LOCC protocol.

The retained block and the exterior part may each contain internal coherence.
Since the exterior part needs to be re-prepared, its original coherence can be discarded.
The weight $w$ is the probability of obtaining the retained block and is not necessarily equal to the weight of the entangled part in the best separable approximation.
The theorem only states that such a flag protocol exists; it does not restrict which initial measurement must be used by other protocols realizing the same transformation.

Below we briefly outline the proof of necessity.
Detailed arguments are given in the two lemmas in the End Matter and in Secs.~\ref{sec:protocol}--\ref{sec:shell}.

We still use $\SEP$ to denote the cone of separable positive operators.
Consider the optimization problem in the best separable approximation~\cite{BSA}:
maximize $\Tr S$ subject to $S\in\SEP$ and $0\le S\le\rho$.
This maximum is invariant under reversible transformations.
The round-trip channel $\Psi\Lambda$ maps the compact convex set of optimal solutions into itself,
and hence has a fixed point therein. Choosing this fixed point yields
\begin{equation}
\rho=t\delta+S,\qquad \sigma=t\delta'+S',
\label{eq:compatiblemain}
\end{equation}
where the two given channels realize bidirectional transformations between
$\delta$ and $\delta'$, and between $S$ and $S'$, respectively.
By the optimality of the decomposition, the ranges of both $\delta$ and $\delta'$
contain no nonzero product vector.
Such a subspace is called a completely entangled subspace (CES),
and the corresponding residual state is called a core state.
The above construction does not depend on the uniqueness of the best separable approximation.

We prove that two-qutrit states whose range is a completely entangled subspace,
whenever they are mutually interconvertible, must be locally unitary equivalent.
If the local rank of one of the core states is $(3,3)$,
one can further prove that the two given channels are both
mixtures of product unitary channels on the entire space.
For an NPT core state, the invariance of negativity means that
the positive and negative parts of its partial transpose are mapped respectively
to the corresponding parts of the partial transpose of the target state.
Therefore, the auxiliary state
\begin{equation}
J(\delta)=\frac{(\delta^\Gamma)_-}{N(\delta)}
\label{eq:Jmain}
\end{equation}
can be transformed via the corresponding conjugate protocol.
This protocol is obtained by taking the complex conjugate of Alice's local Kraus operators in the original protocol.
In bidirectional conversion, the entanglement of formation and the negativity
of the core state and of the auxiliary state, respectively, are both invariant.
When the negative spectral subspace is at least two-dimensional,
these equality conditions, together with Lemma~\ref{lem:joint}
and the dimension bound for the kernel of the partial transpose when one local factor is fixed,
jointly rule out local effect operators that are not proportional to the identity operator.
If the negative spectral subspace is one-dimensional and the corresponding vector has Schmidt rank three,
the conclusion follows from the equality condition for the monotonicity of entanglement entropy of pure states.
If the Schmidt rank of this vector is two,
then it protects the product subspace spanned by the Schmidt supports on both sides.
For a CES core state of full local rank,
Lemma~\ref{lem:residual} extends this protection to the entire space.
This argument does not require the partial transpose to be invertible.
Qubit--qutrit core states of rank two and PPT core states of rank four
are handled using matrix pencils and the structure of low-rank PPT states, respectively
~\cite{Pencils,PPT4,LowRank}.

After aligning the two core states by local unitary transformations,
both channels leave the same core state $\delta$ invariant.
A pure core state of Schmidt rank two protects its $2\times2$ Schmidt product subspace:
each Kraus operator, restricted to this subspace,
is a scalar multiple of a product unitary operator, and the sum of the corresponding weights is one.
A mixed $2\times3$ core state protects the $2\times3$ product subspace in which it lies.
If the original state has coherence between the protected subspace and its orthogonal complement,
the Koashi--Imoto coherence relation~\cite{KI}
can enlarge the protected range while keeping the Kraus weights unchanged.
If the enlarged subspace occupies all three local levels on both sides,
the product structure of the Kraus operators forces the two channels to be
mixtures of product unitary channels on the entire space, yielding Case~2.
Otherwise, what is protected is still a proper product subspace.
The weight within this subspace cannot decrease in either the forward or the reverse conversion direction,
so it must be equal in the initial and target states.
Therefore, for the states under consideration, the exterior part does not transfer weight to the retained block,
and the two retained blocks are locally unitary equivalent.

For a $2\times3$ product subspace,
its orthogonal complement has only one dimension on Alice's side,
so the exterior part must be separable.
For a $2\times2$ product subspace,
if the exterior part is still entangled,
one can select from it another CES residual state compatible with the given channel.
Mixing it with the protected pure core state
yields a CES state of full local rank,
and the rigidity of the core state then gives Case~2.
In all remaining cases, the exterior part is separable,
yielding Eqs.~\eqref{eq:oneflag} and~\eqref{eq:twoflag},
in which the flags can be read by both parties on their respective physical subsystems.

For entangled states $\rho$ and $\sigma$,
Cases~2--4 can be equivalently written in the following projector form:
there exist product projectors $R,R'$ and a product unitary operator $W$,
where the local ranks of the two projectors on each side are both two or three, such that
\begin{gather}
[\rho,R]=[\sigma,R']=0,\quad
\rho-R\rho R,\ \sigma-R'\sigma R'\in\SEP,\label{eq:projector1}\\
WRW^\dagger=R',\quad
R'\sigma R'=W(R\rho R)W^\dagger.
\label{eq:projector2}
\end{gather}
Here the projectors select the entire retained block,
and there is no additional requirement that they commute with
the one-dimensional support projector of the chosen pure core state.

\emph{The least retained block.---}
The above projector conditions can be further simplified:
one only needs to compare the retained blocks uniquely determined by the two states.
For an entangled state, a product projector \(R\) whose local ranks on both sides are two or three,
which commutes with the state, and for which the part outside the retained block is separable,
i.e., which satisfies Eq.~\eqref{eq:projector1},
is called an admissible projector of the state.
We prove that, ordered by subspace inclusion,
there exists a unique least element \(R_*(\rho)\) among all admissible projectors. Define
\begin{equation}
C(\rho)=R_*(\rho)\rho R_*(\rho),\qquad
w_*(\rho)=\Tr C(\rho).
\label{eq:C}
\end{equation}

\begin{theorem}\label{thm:criterion}
For entangled two-qutrit states \(\rho\) and \(\sigma\), we have
\begin{equation}
\rho\rev\sigma\quad\Longleftrightarrow\quad
C(\rho)\lueq C(\sigma).
\label{eq:criterion}
\end{equation}
\end{theorem}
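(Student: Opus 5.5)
We derive Theorem~\ref{thm:criterion} from Theorem~\ref{thm:classification} in its projector form, Eqs.~\eqref{eq:projector1}--\eqref{eq:projector2}, together with the existence and uniqueness of the least admissible projector $R_*$. Both $\rho$ and $\sigma$ are entangled, so Case~1 is excluded. The reversible relation then says exactly that there are admissible projectors $R$ of $\rho$ and $R'$ of $\sigma$, and a product unitary $W$, with $WRW^\dagger=R'$ and $R'\sigma R'=W(R\rho R)W^\dagger$.

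\emph{Sufficiency.} Suppose $C(\rho)\lueq C(\sigma)$, so $W C(\rho)W^\dagger=C(\sigma)$ for some product unitary $W$. Since $R_*(\rho)$ and $R_*(\sigma)$ are product projectors, $W$ maps the local supports of $C(\rho)$ onto those of $C(\sigma)$. These blocks are entangled, because the parts outside them are separable and the states are entangled.
\begin{itemize}
\item I want $WR_*(\rho)W^\dagger=R_*(\sigma)$. This needs $R_*$ to be recoverable from the block $C$ itself, e.g.\ as the product projector onto the local supports of $C$, possibly enlarged in a canonical way determined by $C$ alone.
\item If that fails, I replace $W$ on the orthogonal complement of the block's local supports. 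Any product unitary agreeing with $W$ on the relevant local subspaces still conjugates $C(\rho)$ to $C(\sigma)$. The local ranks of $R_*(\rho)$ and $R_*(\sigma)$ must match, since these are invariants of $C$ when $R_*$ is least. Matching dimensions then let me extend $W$ so that it maps $R_*(\rho)$ onto $R_*(\sigma)$.
\end{itemize}
Equations~\eqref{eq:projector1}--\eqref{eq:projector2} then hold with $R=R_*(\rho)$ and $R'=R_*(\sigma)$. Theorem~\ref{thm:classification} gives $\rho\rev\sigma$; the full-rank case $R=\mathbb 1$ corresponds to Case~2.

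\emph{Necessity.} Suppose $\rho\rev\sigma$, and take $R,R',W$ from the projector form. The key claim is that the least retained block is transported: $WR_*(\rho)W^\dagger=R_*(\sigma)$. To prove it, set $\tilde\rho:=W R\rho R W^\dagger = R'\sigma R'$. For $Q\le R$, the projector $Q$ is admissible for $\rho$ if and only if $WQW^\dagger$ is admissible for $\sigma$. Two facts give this:
\begin{itemize}
\item Commuting with $\rho$ and $Q\le R$ reduces to commuting with $R\rho R$.
\item The exterior part $\rho-Q\rho Q=(\rho-R\rho R)+(R\rho R-Q\rho Q)$ is a sum of two orthogonally supported pieces. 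The first is separable by assumption, so $\rho-Q\rho Q$ is separable exactly when $R\rho R-Q\rho Q$ is. This condition is the same for $\tilde\rho$ as for $\sigma$.
\end{itemize}
Moreover $R_*(\rho)\le R$ by minimality, since $R$ is admissible. Hence $R_*(\rho)$ is the least admissible projector below $R$, and it corresponds under $W$ to the least admissible projector of $\sigma$ below $R'$, which is $R_*(\sigma)$. Then $C(\sigma)=R_*(\sigma)\sigma R_*(\sigma)=W C(\rho)W^\dagger$.

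\emph{Main obstacle.} The hard step is the lattice property behind uniqueness of $R_*$: the meet of two admissible projectors, taken as the product projector onto the intersected local subspaces, must again be admissible. This is needed to define $R_*$ and implicitly used above. Commutation with $\rho$ is preserved under the meet, since both projectors commute with $\rho$ and with each other's local factors. The delicate points are:
\begin{itemize}
\item separability of the enlarged exterior, and
\item keeping local ranks at least two.
\end{itemize}
For these I would use the case structure of Theorem~\ref{thm:classification}. The entangled core $\delta$ of the best separable approximation must lie inside every retained block. Each admissible $R$ therefore contains the product span of the local supports of $\delta$, and its Koashi--Imoto enlargement, which forces local ranks at least two. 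Separability of the exterior of the meet then follows because the extra piece is supported on a product subspace orthogonal to the core's protected subspace, and there the state is a separable remainder of $\rho$.
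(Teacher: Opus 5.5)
The gap is in your sufficiency direction. You route it through the projector form of Theorem~\ref{thm:classification}, which needs $WR_*(\rho)W^\dagger=R_*(\sigma)$. Both of your options rest on an unproved assertion: that $R_*$, or at least its local ranks, is determined by $C$. Your fallback of re-extending $W$ off the local supports is fine once the ranks match, but ``these are invariants of $C$ when $R_*$ is least'' is exactly what has to be shown. The paper proves it with a short minimality argument, and you should supply it.

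\begin{itemize}
\item Suppose $R_*=I\otimes Q$ while $C$ has two-dimensional Alice support $U$.
\item Then $P_U\otimes Q$ still commutes with $\rho$. This is because $\rho=C+T_{R_*}$, with $C$ supported in $U\otimes\ran Q$ and $T_{R_*}$ supported in the range of $I-R_*$.
\item It leaves the block $C$ and the separable exterior $T_{R_*}$ unchanged.
\item Its local ranks are at least two, because $C$ is entangled.
\item So it is a strictly smaller admissible projector, contradicting minimality.
\end{itemize}

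Hence $R_*$ is the tensor product of the local support projectors of $C$. Any product unitary with $WC(\rho)W^\dagger=C(\sigma)$ therefore maps $R_*(\rho)$ onto $R_*(\sigma)$.

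Alternatively, you can bypass the issue entirely. Measure the two local factors of $R_*(\rho)$, and apply $W$ on the retained branch, which has probability $w_*(\rho)=w_*(\sigma)$. On the other branches, prepare the separable state $T_{R_*(\sigma)}/(1-w_*)$. The average output is $C(\sigma)+T_{R_*(\sigma)}=\sigma$, whatever $W$ does to the projectors.

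Your necessity argument is essentially the paper's. Admissible projectors below $R$ and below $R'=WRW^\dagger$ correspond under $W$, and the least elements match; the paper writes this as two opposite inequalities. Your second bullet needs one more justification. The ``only if'' half of ``separable exactly when'' does not follow from orthogonal supports, since projecting a separable operator need not preserve separability. It holds because $R\rho R-Q\rho Q=R(\rho-Q\rho Q)R$, and compression by the product projector $R$ preserves $\SEP$.

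Your ``main obstacle'', the existence and uniqueness of $R_*$, is a prerequisite for the statement rather than part of its proof. The paper settles it beforehand in two ways:
\begin{itemize}
\item For NPT states, the negative part of $\rho^\Gamma$ confines admissible projectors to $R_0,R_A,R_B,I$, and a four-block argument handles the meet.
\item For PPT states, $2\times3$ PPT separability leaves only $I$.
\end{itemize}

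Your best-separable-approximation idea also works, and treats both cases uniformly. Every optimal residual is supported in every admissible $R$, so each local factor contains the core's local support of dimension at least two. Each factor is therefore that support or $I$, so the factors commute. The Koashi--Imoto and ``protected subspace'' language is superfluous, since no channel is involved. The meet $R_1R_2$ is admissible simply because its exterior is $(\rho-R_1\rho R_1)+R_1(\rho-R_2\rho R_2)R_1$, which is separable by the same compression fact.
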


The retained block is not normalized, so it simultaneously preserves its probability weight and the information about the state itself.
Therefore, Theorem~\ref{thm:criterion} reduces the determination of reversible transformations
to a comparison between two well-defined retained blocks, without any additional choice of projector.
The proof and the concrete decision method are given in
Secs.~\ref{sec:canonical}--\ref{sec:algorithm}.

For an NPT state, let \(H_-=(\rho^\Gamma)_-\),
and denote by \(P_{\supp X}\) the support projector of a positive operator \(X\). Define
\begin{equation}
P_A=(P_{\supp\Tr_B H_-})^T,\qquad
P_B=P_{\supp\Tr_A H_-} .
\label{eq:negativeplanes}
\end{equation}
Since we use the convention \(\Gamma=T_A\), a transpose is needed on Alice's side.
Any admissible projector must be one of the following projectors:
\begin{equation}
R_0=P_A\otimes P_B,\quad R_A=P_A\otimes I,\quad
R_B=I\otimes P_B,\quad I.
\label{eq:candidates}
\end{equation}
Indeed, the separable exterior part remains positive after partial transpose,
and its support is orthogonal to the transformed retained product subspace.
Therefore, the local subspaces on both sides of the retained block must respectively contain
the subspaces determined by Eq.~\eqref{eq:negativeplanes}.
In a three-dimensional space, a subspace containing a given two-dimensional subspace
can only be that subspace itself or the entire space.

When both \(P_A\) and \(P_B\) have rank two, let
\(X=(P_A\otimes P_B^\perp)\rho(P_A^\perp\otimes P_B)\), and compute
\begin{align}
d_0&=\|R_0\rho(I-R_0)\|_{\HS}^2+\|X\|_{\HS}^2,\nonumber\\
d_A&=\|R_A\rho(I-R_A)\|_{\HS}^2,\quad
d_B&=\|R_B\rho(I-R_B)\|_{\HS}^2 .
\label{eq:zerotests}
\end{align}
\(d_0=0\) if and only if \(\rho\) has no coherence terms between
the subspace selected by \(R_0\) and its orthogonal complement,
and the exterior part is separable.
In this case, take \(R_*=R_0\).
If \(d_0>0\), then take \(R_*=R_A\) when \(d_A=0\),
and take \(R_*=R_B\) when \(d_B=0\);
if neither \(d_A\) nor \(d_B\) is zero, take \(R_*=I\).
When \(d_0>0\), the two one-sided tests cannot both vanish.
If there are duplicate projectors in Eq.~\eqref{eq:candidates}, only one of them needs to be kept.
After finding the retained blocks of the two states, one still needs to compare whether they are locally unitary equivalent;
in general, having the same spectrum is not sufficient to guarantee this.

For PPT entangled states, the only admissible projector is the identity projector.
If there exists an admissible projector corresponding to a proper product subspace,
then this subspace is at most \(2\times3\).
In this dimension, a PPT state must be separable~\cite{PPTcriterion},
so both the retained block and the exterior part are separable, and the whole state is separable,
contradicting the assumption.
Therefore \(R_*=I\) and \(C(\rho)=\rho\),
and the reversible equivalence class of a PPT entangled state is its entire local unitary orbit.
For a general PPT input state, one still needs to determine whether it is separable;
this problem is not eliminated by the retained-block criterion.

For example, let
\(\ket{\Phi}=(\ket{00}+\ket{11})/\sqrt2\), and take
\begin{align}
\rho&=\tfrac35\proj\Phi+\tfrac25\proj{22},\nonumber\\
\sigma&=\tfrac35\proj\Phi+\tfrac25\proj2\otimes I_3/3.
\label{eq:examplemain}
\end{align}
The least retained blocks of both states are \(C=\tfrac35\proj\Phi\).
Alice tests whether the system is in the third level:
if it is not in the third level, the entangled part is retained, and the probability of this outcome is \(3/5\);
if it is in the third level, then Bob prepares \(I_3/3\) in the forward conversion,
and prepares \(\proj2\) in the reverse conversion.
Both states have local rank three on both sides,
but their global ranks are two and four, respectively, so they are not locally unitary equivalent.
The other examples in Sec.~\ref{sec:examples} further show that,
even if the negative parts of the partial transposes are completely identical,
the retained blocks of two states need not be locally unitary equivalent.

\emph{Conclusion.---}
Within the specified class of LOCC protocols, this paper completely classifies exact deterministic reversible transformations between two-qutrit states.
Apart from global local unitary transformations, reversible transformations are allowed to change only the separable exterior part of the state,
and this part must be distinguishable from the retained entangled block by local measurements.
The probability weight of the entangled block and its local unitary equivalence class remain invariant;
whether there is coherence between it and the orthogonal complement determines how large a block must be retained.

The least retained block uniquely determines the reversible equivalence class to which an entangled state belongs.
For NPT states, determining this block does not require solving the best separable approximation problem;
for general PPT states, separability still has to be decided separately.
When the local dimensions are higher, multiple blocks distinguished by local flags can be entangled simultaneously.
How to completely classify reversible transformations in this case remains an open problem.

\begin{acknowledgments}
This work was supported by the Natural Science Foundation of Anhui Province under Grant No. 2508085ZD001, and Scientific Research Projects of Anhui Provincial Department of Education under Grant Nos. 2025AHGXZK20126 and 2025AHGXZK50054.
\end{acknowledgments}


\clearpage
\setcounter{equation}{0}
\renewcommand{\theequation}{E\arabic{equation}}
\renewcommand{\theHequation}{E.\arabic{equation}}

\section*{End Matter}

\emph{A joint-kernel condition.---}
The following lemma is used to prove the rigidity of core states,
in order to rule out local effect operators that are not proportional to the identity operator.
This conclusion does not require the partial transpose to be invertible.
Let \(P_\pm(H)\) denote the projectors onto the positive and negative spectral subspaces of \(H\).

\begin{emlemma}\label{lem:joint}
Let a positive operator \(\eta\ge0\) have range a completely entangled subspace
and local rank three on Alice's side.
Let a Hermitian operator \(D\) have exactly one negative eigenvalue, one zero eigenvalue,
and one positive eigenvalue. Then the following two conditions cannot hold simultaneously:
\begin{align}
P_{\supp\eta}(D\otimes I)P_{\supp\eta}&=0,\nonumber\\
P_+(\eta^\Gamma)(D^T\otimes I)P_-(\eta^\Gamma)&=0.
\label{eq:jointconditions}
\end{align}
\end{emlemma}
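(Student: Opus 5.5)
The plan is to argue by contradiction: assume both conditions in Eq.~\eqref{eq:jointconditions} hold, extract an explicit normal form for the range of \(\eta\) from the first condition, and then show that the second condition is incompatible with that normal form together with the CES property and local rank three on Alice's side. Write the spectral decomposition \(D=\lambda_+\proj a-\lambda_-\proj b+0\cdot\proj c\) with \(\lambda_\pm>0\) and \(\{a,b,c\}\) orthonormal. For \(\psi\in\ran\eta=:S\), write \(\psi=\ket a\otimes x_\psi+\ket b\otimes y_\psi+\ket c\otimes z_\psi\).

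\textbf{Step 1 (normal form from the first condition).} The first condition says that the sesquilinear form \((\psi,\phi)\mapsto\langle\psi|D\otimes I|\phi\rangle\) vanishes identically on \(S\), that is,
\begin{equation*}
\lambda_+\langle x_\psi|x_\phi\rangle=\lambda_-\langle y_\psi|y_\phi\rangle\qquad\text{for all }\psi,\phi\in S .
\end{equation*}
The linear maps \(\psi\mapsto\sqrt{\lambda_+}\,x_\psi\) and \(\psi\mapsto\sqrt{\lambda_-}\,y_\psi\) therefore have the same Gram matrix on \(S\). Hence there is a partial isometry (which can be extended to a unitary) \(V\) on \(\mathbb C^3_B\) with \(x_\psi=\mu V y_\psi\), where \(\mu=\sqrt{\lambda_-/\lambda_+}\). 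This gives
\begin{equation*}
S\subseteq\{(\mu\ket a\otimes V+\ket b\otimes I)y+\ket c\otimes z\}.
\end{equation*}
I would then record three consequences. First, the \(a\)-\(b\) component of every vector in \(S\) has Schmidt rank \(0\) or \(2\). Second, \(y_\psi=0\) forces \(\psi\) to be a product vector \(\ket c\otimes z_\psi\), so by the CES property the map \(\psi\mapsto y_\psi\) is injective on \(S\). Third, local rank three on Alice's side forces some \(z_\psi\neq0\). Consequently \(S\) is the graph \(\{(\mu\ket a\otimes V+\ket b\otimes I)y+\ket c\otimes Fy:\ y\in Y\}\) of a linear map \(F\) defined on a subspace \(Y\subseteq\mathbb C^3_B\), with \(\dim S=\dim Y\le3\) and \(F\neq0\). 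The CES condition then becomes a concrete constraint on the pair \((V,F)\): no \(y\in Y\setminus\{0\}\) may make \(\mu\ket a\otimes Vy+\ket b\otimes y+\ket c\otimes Fy\) a product vector. This fails exactly when \(Vy\), \(y\), and \(Fy\) are pairwise parallel, with the proportionality constants matching.

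\textbf{Step 2 (the second condition in this frame).} Since \(D^T\otimes I=(D\otimes I)^\Gamma\) is Hermitian, the second condition says that \(D^T\otimes I\) is block diagonal with respect to \(P_+(\eta^\Gamma)\oplus\bigl(I-P_+(\eta^\Gamma)\bigr)\). Equivalently, it maps the negative spectral subspace of \(\eta^\Gamma\) into \(\ker\eta^\Gamma\oplus\ran P_-(\eta^\Gamma)\). In the conjugate basis \(\{\bar a,\bar b,\bar c\}\) we have \(D^T=\lambda_+\proj{\bar a}-\lambda_-\proj{\bar b}\). I would write \(\eta=\sum_{ij}\ket i\bra j\otimes\eta_{ij}\) with \(i,j\in\{a,b,c\}\). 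The blocks \(\eta_{ij}\) are determined by a positive matrix \(G\) on \(Y\) through \(\eta_{aa}=\mu^2VGV^\dagger\), \(\eta_{ab}=\mu VG\), \(\eta_{ac}=\mu VGF^\dagger\), \(\eta_{bc}=GF^\dagger\), and so on. From these blocks I would write \(\eta^\Gamma=\sum\ket{\bar j}\bra{\bar i}\otimes\eta_{ij}\) explicitly.

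The block-diagonality then means that \(D^T\otimes I\) commutes with the spectral projector \(P_+(\eta^\Gamma)\). Because \(D^T\) has three distinct eigenvalues, this should force \(P_+(\eta^\Gamma)\) to be block diagonal in the \(\{\bar a,\bar b,\bar c\}\) decomposition of Alice's space, up to a correction coming from the kernel. I would first try to show that \(P_+(\eta^\Gamma)\) commutes with \(\proj{\bar a}\otimes I\) and with \(\proj{\bar b}\otimes I\). This would imply that the off-diagonal blocks \(\eta_{ab},\eta_{ac},\eta_{bc}\) of \(\eta^\Gamma\) map positive eigenvectors into the nonpositive part.

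\textbf{Step 3 (contradiction; main obstacle).} The hard step is turning the structure from Step 2 into a contradiction. I would split into two cases. If \(P_-(\eta^\Gamma)=0\), the second condition is vacuous, so I must show directly that no PPT CES state of the graph form from Step 1 exists with \(F\neq0\). My route would be that the off-diagonal blocks \(\mu VG\) and \(GF^\dagger\) then violate the \(2\times2\) positivity constraints of the partial transpose unless \(G\) or \(F\) degenerates, and any such degeneration produces a product vector or drops Alice's rank. If \(P_-(\eta^\Gamma)\neq0\), I would take a negative eigenvector \(\xi\) and use \((D^T\otimes I)\xi\perp\ran P_+\) together with \(\langle\xi|\eta^\Gamma|\xi\rangle<0\). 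Expanding in the blocks of Step 2 and using the relation \(\lambda_+\mu^2=\lambda_-\) from Step 1, I expect the negativity to collapse onto the \(\bar c\)-coherences \(\eta_{ac},\eta_{bc}\). From there the goal is to show that \(F\) has rank one with range parallel to \(VY\cap Y\), which should yield a product vector in \(S\) and contradict the CES property.

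I am least certain of the precise commutation claim in Step 2 when \(\ker\eta^\Gamma\neq0\). If it fails as stated, I would instead work directly with the weaker statement \(P_+(D^T\otimes I)P_-=0\), compressed to the small subspace spanned by \(\xi\), \((\proj{\bar a}\otimes I)\xi\), and \((\proj{\bar b}\otimes I)\xi\).
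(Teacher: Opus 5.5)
\textbf{There is a genuine gap in Steps~2 and~3.} Your Step~1 is correct and is the same factorization the paper uses ($R_2=\sqrt{a/b}\,UR_0$ there, $x_\psi=\mu Vy_\psi$ here).

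Step~2 claims more than the hypothesis gives. With $X=D^T\otimes I$, the condition $P_+(\eta^\Gamma)XP_-(\eta^\Gamma)=0$ only says that $X$ maps $\ran P_-$ into $\ker\eta^\Gamma\oplus\ran P_-$. It gives no control of $P_+XP_0$, where $P_0$ is the kernel projector. So neither block-diagonality with respect to $P_+\oplus(I-P_+)$ nor $[X,P_+]=0$ follows when $\eta^\Gamma$ is singular, and the lemma must cover that case.

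Step~3 is a list of expectations rather than an argument. In the PPT case, the $2\times2$ blocks you intend to use cannot force any degeneration. Take $G=I$ on $Y=\mathbb C^3$. Then the $(\bar a,\bar b)$ block $\bigl(\begin{smallmatrix}\mu^2I&\mu V^\dagger\\ \mu V&I\end{smallmatrix}\bigr)$ is positive semidefinite for every unitary $V$. The $(\bar b,\bar c)$ block $\bigl(\begin{smallmatrix}I&F\\ F^\dagger&FF^\dagger\end{smallmatrix}\bigr)$ is positive semidefinite for every normal $F$. These data satisfy the first condition by construction, and generic such $V,F$ have no common eigenvector, so the range is a CES of Alice-rank three. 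In the NPT case, testing a single negative eigenvector and hoping the negativity ``collapses onto the $\bar c$-coherences'' is likewise not a mechanism.

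\textbf{The missing idea} is to fuse the two conditions into one operator identity. Use $(\eta(D\otimes I))^\Gamma=X\eta^\Gamma$ and $\Tr(A^\Gamma B^\Gamma)=\Tr(AB)$, and work in an eigenbasis of $H=\eta^\Gamma$ with eigenvalues $\lambda_i$. The first condition then gives
\[
0=\Tr[\eta(D\otimes I)\eta(D\otimes I)]=\Tr(HXHX)=\sum_{ij}\lambda_i\lambda_j|X_{ij}|^2 .
\]
The second condition removes exactly the terms with $\lambda_i\lambda_j<0$, and the remaining terms are nonnegative. Hence $X_{ij}=0$ whenever $\lambda_i\lambda_j\ne0$, i.e.\ $HXH=0$. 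This needs no invertibility of $H$ and covers the PPT case at the same time.

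In your coordinates, the $(\bar k,\bar l)$ block of $HXH$ is $\lambda_+\eta_{ak}\eta_{la}-\lambda_-\eta_{bk}\eta_{lb}=\lambda_-(VM_{kl}V^\dagger-M_{kl})$, where $M_{kl}=\eta_{bk}\eta_{lb}$. The choices $(k,l)=(b,b),(b,c),(c,c)$ give $[V,G]=0$ (so $VY=Y$), $[V|_Y,P_YF]=0$ and $[V|_Y,F^\dagger F]=0$. Local rank three forces $V|_Y\not\propto I$, which also excludes $\dim Y=1$.
\begin{itemize}
\item If $\dim Y=3$, then $F$ commutes with $V$, so they share an eigenvector.
\item If $\dim Y=2$, then $V|_Y$ has distinct eigenvalues. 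So $P_YF$ and $F^\dagger F-(P_YF)^\dagger P_YF=(P_{Y^\perp}F)^\dagger P_{Y^\perp}F$ are diagonal in its eigenbasis. Since $\rank P_{Y^\perp}F\le1$, some eigenvector $e$ satisfies $Fe=P_YFe\parallel e$.
\end{itemize}
Either way there is $y\ne0$ with $Vy\parallel y$ and $Fy\parallel y$, which is a product vector in $\ran\eta$. This is the contradiction your Step~3 was aiming at, and it is how the paper's proof proceeds.
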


\begin{proof}
Let \(H=\eta^\Gamma\) and \(X=D^T\otimes I\).
By the first condition, \(\eta(D\otimes I)\eta=0\).
In a basis that diagonalizes \(D\), write the operator in block form according to Alice's basis;
expanding yields
\begin{equation}
0=\Tr[\eta(D\otimes I)\eta(D\otimes I)]=\Tr(HXHX).
\label{eq:traceidentity}
\end{equation}
Equation~\eqref{eq:traceidentity} is the trace identity obtained from the block expansion.
In the eigenbasis of \(H\), the right-hand side can be written as
\(\sum_{ij}\lambda_i\lambda_j|X_{ij}|^2\).
By the second condition in Eq.~\eqref{eq:jointconditions},
all terms with \(\lambda_i\) and \(\lambda_j\) of opposite signs vanish.
The remaining terms are nonnegative, and the total sum is zero;
hence \(X_{ij}=0\) whenever \(\lambda_i\lambda_j\ne0\).
Therefore \(HXH=0\).
This derivation also applies to the case where \(H\) has a zero eigenvalue.

Write \(D=\operatorname{diag}(-a,0,b)\), where \(a,b>0\),
and take the decomposition \(\eta=RR^\dagger\), where \(R\) is a full-column-rank matrix of rank \(r\):
\begin{equation}
R=\begin{pmatrix}R_0\\R_1\\R_2\end{pmatrix},\qquad
aR_0^\dagger R_0=bR_2^\dagger R_2.
\label{eq:Rfactor}
\end{equation}
If there exists a nonzero vector \(x\) satisfying \(R_0x=0\),
then the above equation also gives \(R_2x=0\).
In this case \(Rx=\ket1\otimes R_1x\) is a nonzero product vector in the range of \(\eta\),
contradicting the CES assumption.
Hence \(R_0\) is injective, \(r\le3\),
and there exists a unitary operator \(U\) on Bob's side such that
\(R_2=\sqrt{a/b}\,UR_0\).

Let \(V=\ran R_0\) and \(G=R_0R_0^\dagger\),
and write \(R_1=TR_0\), where \(T:V\to\mathbb C^3\).
From the individual blocks of \(HXH=0\) we obtain
\begin{equation}
[U,R_0R_i^\dagger R_jR_0^\dagger]=0
\qquad(0\le i,j\le2).
\label{eq:commutators}
\end{equation}
Taking \((i,j)=(0,0)\) gives \([U,G^2]=0\),
and hence \([U,G]=0\), so \(V\) is invariant under \(U\).
Next let \(T_0=P_VT\) and \(T_1=P_{V^\perp}T\).
Since \(G\) is invertible on \(V\),
the equations corresponding to \((i,j)=(0,1)\) and \((1,1)\) yield
\begin{equation}
[U|_V,T_0]=[U|_V,T_1^\dagger T_1]=0.
\label{eq:Tcommutators}
\end{equation}

If \(U|_V\) is proportional to the identity operator, then \(R_2\propto R_0\),
contradicting local rank three on Alice's side.
Therefore \(r=1\) is also impossible.

If \(r=3\), then \(V=\mathbb C^3\), and hence \(T_1=0\).
Since \(T\) commutes with \(U\),
one can choose an eigenvector of \(T\) within some eigenspace of \(U\).
It is also an eigenvector of \(U\).
Take its preimage \(x\) under \(R_0\);
then \(R_0x\), \(R_1x\), and \(R_2x\) are collinear with one another,
so \(Rx\) is a nonzero product vector.

If \(r=2\), then the two eigenvalues of \(U|_V\) are distinct;
otherwise it would be proportional to the identity operator.
Equation~\eqref{eq:Tcommutators} shows that
\(T_0\) and \(T_1^\dagger T_1\) are both diagonal in the eigenbasis of \(U|_V\).
Since \(\rank T_1\le1\), at least one basis vector lies in \(\ker T_1\).
This vector is also an eigenvector of \(T_0\),
and hence an eigenvector of \(T=T_0+T_1\).
Taking its preimage under \(R_0\) again yields a nonzero product vector.

Both cases above contradict the assumption that the range of \(\eta\) is a CES.
Therefore the two conditions cannot hold simultaneously.
\end{proof}

\emph{One negative vector of Schmidt rank two.---}
The following lemma deals with the cases that the preceding local effect operator argument could not rule out.
The proof uses the results on protected subspaces from Sec.~\ref{sec:protected}.

\begin{emlemma}\label{lem:residual}
Let the range of \(\delta\) be a completely entangled subspace, with local rank \((3,3)\), and suppose that
\begin{equation}
\delta^\Gamma=H_+-\nu\proj v,\qquad
\nu>0,\quad \operatorname{SR}(v)=2 .
\label{eq:singlenegative}
\end{equation}
Then any separable CPTP channel that leaves \(\delta\) invariant
is a mixture of product unitary channels on the entire space.
Moreover, in any product Kraus representation of this channel,
every nonzero operator is proportional to some product unitary operator.
\end{emlemma}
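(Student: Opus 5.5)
Let \(\Phi(X)=\sum_k (A_k\otimes B_k)X(A_k\otimes B_k)^\dagger\) be a separable CPTP channel with \(\Phi(\delta)=\delta\). The plan has three steps. First, show that \(\Phi\) protects the negative vector \(v\). Second, show that the protection of the \(2\times2\) Schmidt product subspace of \(v\) transfers to the whole space. Third, conclude that every Kraus operator is proportional to a product unitary.

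\emph{Step 1: protection of \(v\).} Use the conjugate channel \(\tilde\Phi\), whose Kraus operators are \(\bar A_k\otimes B_k\). It satisfies \(\tilde\Phi(X^\Gamma)=\Phi(X)^\Gamma\), so \(\tilde\Phi(\delta^\Gamma)=\delta^\Gamma\). Since \(\tilde\Phi\) is positive and trace preserving, it cannot increase \(\Tr(\cdot)_-\). Equality of the negativities then forces \(\tilde\Phi(H_+)=H_+\) and \(\tilde\Phi(\proj v)=\proj v\). As \(\proj v\) is a pure fixed point of a CPTP map, each Kraus operator satisfies \((\bar A_k\otimes B_k)\ket v=c_k\ket v\) with \(\sum_k|c_k|^2=1\).

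Write \(v=\sum_{i=0,1}s_i\ket{a_i}\ket{b_i}\), with Schmidt supports \(U=\Span\{a_i\}\) and \(V=\Span\{b_i\}\). Because \(v\) has Schmidt rank two, the eigenvector relation gives \(\bar A_k U\subseteq U\) and \(B_kV\subseteq V\) whenever \(c_k\neq0\). On \(U\otimes V\) it also gives \(\bar A_k|_U\otimes B_k|_V=c_k\,(\bar u_k\otimes w_k)\) with \(u_k,w_k\) unitary; when \(s_0\neq s_1\) these unitaries are diagonal in the Schmidt basis. This is the result on protected subspaces (Sec.~\ref{sec:protected}) that the statement refers to, and I would simply invoke it here.

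Trace preservation then gives \(\sum_k|c_k|^2=1\) on \(U\otimes V\). Consequently \(\sum_k (A_k\otimes B_k)^\dagger(A_k\otimes B_k)\), compressed to \(U\otimes V\), already attains the identity using only the \(c_k\neq0\) terms. Every operator with \(c_k=0\) must therefore annihilate \(U\otimes V\).

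\emph{Step 2: extension to the full space (main obstacle).} The protected \(2\times2\) block must be extended to the full \(3\times3\) space, and this is where the CES property of \(\operatorname{ran}\delta\) and the local rank \((3,3)\) enter. Since \(\delta\) has full local rank, its range cannot lie inside \(U\otimes V\), nor inside \(U\otimes\mathbb C^3\) or \(\mathbb C^3\otimes V\) alone. So \(\delta\) has coherence between \(U\otimes V\) and its complement.

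Use the Koashi--Imoto coherence relation~\cite{KI}, as outlined in the main text. If \(\Phi\) acts as a mixture of unitaries \(\sum_k|c_k|^2\,\mathcal U_k\) on a protected subspace, and \(\Phi\) fixes a state with coherence \(P\delta P^\perp\neq0\), then the Kraus operators act with the same weights \(c_k\) on the larger space generated by that coherence. Iterating, the protected region grows from \(U\otimes V\) to \(U\otimes\mathbb C^3\) or \(\mathbb C^3\otimes V\), and then to the whole space.

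The danger is a CES \(\delta\) whose coherence only reaches a product-incompatible subspace. To rule this out, I would use the product form of the Kraus operators: a protected subspace of the form \(\mathcal P\oplus(\text{coherent range})\) yields invariant subspaces for \(A_k\) and \(B_k\) separately, namely the local supports. Full local rank then forces these local supports to be \(\mathbb C^3\).

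If this fails, the fallback is a contradiction argument. Suppose some \(A_k\) is not proportional to a unitary. Then \(A_k^\dagger A_k\) has a non-scalar part \(D\) supported off \(U\). Feeding \(H_+\) invariance into Lemma~\ref{lem:joint} should yield either a product vector in \(\operatorname{ran}\delta\) or a contradiction with the one-dimensional negative space.

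\emph{Step 3: conclusion.} Once every \(A_k\otimes B_k\) acts as \(c_k\) times a product unitary on the whole space, \(\Phi\) is the mixture \(\sum_k|c_k|^2\,\mathcal U_k\). Any Kraus operator with \(c_k=0\) has already been shown to vanish, which gives the second claim for an arbitrary product Kraus representation.
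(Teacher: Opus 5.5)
Your Step~1 matches the paper: Jordan-part transport for the conjugate channel, then Lemma~\ref{lem:pureprotection}. Note only that the subspace protected by the original channel is $\bar U\otimes V$, not $U\otimes V$.

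\textbf{The main gap is in Step~2.} You infer from full local rank that $\delta$ has coherence between $L$ and $L^\perp$. But $\ran\delta\not\subset L$ does not imply $[\delta,P_L]\neq0$. The state $\delta$ could be block diagonal, $\delta=\delta_L\oplus\delta_Q$ with $\delta_Q\neq0$ supported on $Q=L^\perp$. Neither full local rank nor the CES property forbids this, because $Q$ (the wings $U\otimes m$, $n\otimes V$ and the corner $nm$) can carry a CES of rank up to two. The state $\zeta=\tfrac12\proj\psi+\tfrac12\epsilon$ of Sec.~\ref{sec:shell} is exactly such a full-local-rank CES state. In this case Lemma~\ref{lem:extension} returns $E=L$, and your iteration never starts.

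Excluding $E=L$ is the heart of the paper's proof. It is also the one place where the single-negative-eigenvector hypothesis is used beyond Step~1. The argument runs as follows.
\begin{itemize}
\item Because $v$ lies in $L$ (after the conjugation on Alice's side), the compression $H_Q$ of $H=\delta^\Gamma$ to $Q$ is positive semidefinite.
\item Factor $\delta_Q=RR^\dagger$, with $R$ stacked from blocks $A,B,c$ on the two wings and the corner. The CES property makes $A$ and $B$ injective.
\item The Schur complement of $H_Q$ then equals $-\|c\|^2$, so $c=0$.
\item Hence $\bra{nm}H\ket{nm}=0$ with $nm\perp v$, which gives $H\ket{nm}=0$. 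Its components on $L$ are the entries of $BA^\dagger$, so $BA^\dagger=0$ and $B=0$, contradicting injectivity.
\end{itemize}

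Your fallback cannot replace this. Lemma~\ref{lem:joint} is precisely the tool that is unavailable here. Its hypotheses come, in the paper, from the contact equation at an LOCC node applied to $J(\delta)=\proj v$, and this state has Alice rank two. For an arbitrary separable channel you also have no LOCC node, and no operator $D$ of inertia $(-,0,+)$ satisfying both conditions.

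\textbf{A smaller problem is your endgame.} The protected region can never become the whole space: $E\subset L+\ran\delta$ has dimension at most $4+4=8$ by Eq.~\eqref{eq:CESbound}. What turns protection on $E$ into a statement about every Kraus operator is Lemma~\ref{lem:productrigidity}.
\begin{itemize}
\item If $E$ has local rank $(3,3)$, the triangular form of the product Kraus operators gives $|\alpha_i\beta_i|=1$. The determinant--trace bound $9=\sum_i\Tr K_i^\dagger K_i\ge9\sum_i|\det K_i|^{2/9}$ then makes every positive-weight term a multiple of a product unitary and every zero-weight term vanish.
\item If $E$ has local rank $(2,3)$, one first extends the protection to $T=U\otimes\mathbb C^3$. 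Then $[\delta,P_T]=0$ is impossible, since it would leave a nonzero block of $\delta$ on $U^\perp\otimes\mathbb C^3$, all of whose vectors are product.
\end{itemize}
Your remark about invariant local supports points in this direction but does not supply this mechanism.
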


\begin{proof}
Let \(\Omega\) be a separable CPTP channel that leaves \(\delta\) invariant.
By the equality condition for the invariance of negativity,
\(\widetilde\Omega=\Gamma\Omega\Gamma\) leaves \(\proj v\) invariant.
Taking the complex conjugate of the Schmidt subspace on Alice's side
yields the protected product subspace \(L=U\otimes V\) of the original channel \(\Omega\).
We further extend the protection to
\(E=\Span\{\delta^kL:k\ge0\}\).

If the local rank of \(E\) is \((3,3)\), the conclusion follows from global rigidity.
If the local rank is \((2,3)\), the protection extends to
\(T=U\otimes\mathbb C^3\).
In this case, if \([\delta,P_T]=0\), then, since Alice's side has full local rank,
\(\delta\) must have a nonzero block on \(U^\perp\otimes\mathbb C^3\).
All vectors in this subspace are product vectors, contradicting the CES assumption.
Therefore, the protection must extend beyond \(T\),
which again yields global rigidity.
Exchanging the two parties handles the case of local rank \((3,2)\).

It remains only to rule out \(E=L\).
In this case, \(\delta=\delta_L\oplus\delta_Q\),
where \(Q=L^\perp\) and the range of \(\delta_Q\ne0\) is still a CES.
Choosing a suitable local basis,
one can decompose \(Q\) into two wing subspaces \(U\otimes m\), \(n\otimes V\)
and a corner subspace \(\mathbb C(nm)\).
Take a full-column-rank factorization
\begin{equation}
\delta_Q=RR^\dagger,\qquad
R=\begin{pmatrix}A\\B\\c\end{pmatrix}.
\label{eq:wingfactor}
\end{equation}
If \(A\) or \(B\) has a nonzero kernel vector,
then under the action of \(R\) this vector becomes a nonzero product vector,
contradicting that \(\ran R\) is a CES.
Therefore, both \(A\) and \(B\) are injective,
and \(1\le r=\rank\delta_Q\le2\).

The negative spectral subspace of \(H=\delta^\Gamma\) is spanned by the vector \(v\) in \(L\),
so the compression of \(H\) onto \(Q\) is positive.
Ordering the basis as corner, first wing, second wing, we have
\begin{equation}
H_Q=
\begin{pmatrix}
\|c\|^2&f_A^\dagger&f_B^\dagger\\
f_A&\bar A A^T&0\\
f_B&0&BB^\dagger
\end{pmatrix}\ge0,
\label{eq:wingPT}
\end{equation}
where \(f_A=\bar A c^T\) and \(f_B=B\bar c^{\,T}\).
By the injectivity of \(A\) and \(B\),
\begin{equation}
f_A^\dagger(\bar A A^T)^+f_A
=f_B^\dagger(BB^\dagger)^+f_B=\|c\|^2 .
\label{eq:Schur}
\end{equation}
Therefore, the Schur complement of Eq.~\eqref{eq:wingPT} gives
\(-\|c\|^2\ge0\), and hence \(c=0\).

Since \(nm\perp v\) and \(H\) is positive on \(v^\perp\),
\(\bra{nm}H\ket{nm}=0\) implies \(H\ket{nm}=0\).
In the above local basis, for \(i,j=0,1\) we have
\(\langle ij|H|22\rangle=(BA^\dagger)_{ji}\),
and hence \(BA^\dagger=0\).
Also, since \(A\) has full column rank, \(A^\dagger\) is surjective,
so \(B=0\), contradicting the injectivity of \(B\).
Therefore, \(E=L\) is impossible.
\end{proof}

\clearpage
\onecolumngrid
\setcounter{section}{0}
\setcounter{secnumdepth}{2}
\setcounter{equation}{0}
\renewcommand{\thesection}{S\arabic{section}}
\renewcommand{\thesubsection}{\thesection.\arabic{subsection}}
\renewcommand{\theequation}{S\arabic{equation}}
\renewcommand{\theHequation}{S.\arabic{equation}}
\renewcommand{\theHsection}{S.\arabic{section}}
\renewcommand{\theHsubsection}{S.\arabic{section}.\arabic{subsection}}

\begin{center}
\large\bfseries Supplemental Material\\[3pt]
\normalsize Complete Classification of Reversible Entanglement Transformations of Two Qutrits
\end{center}

Sections~\ref{sec:protocol}--\ref{sec:shell} give
the proof of Theorem~\ref{thm:classification}:
first derive the equality conditions for reversible protocols, then prove the rigidity of CES core states,
and finally decompose the original state into blocks that can be distinguished by local measurements.
Sections~\ref{sec:canonical}--\ref{sec:examples} prove
Theorem~\ref{thm:criterion} and explain how to apply this criterion.
The proof of core-state rigidity also uses Lemma~\ref{lem:joint} and Lemma~\ref{lem:residual}
in the End Matter.

All references are listed uniformly in the bibliography of the main text.
The local rank of a subspace means the rank of the operator obtained by
taking the partial trace of its orthogonal projector over the other party.
Convex mixtures of product unitary channels are called
mixed local unitary (mixed-LU) channels.
The separable positive operators mentioned in the text need not be normalized.

\section{Protocols and equality conditions}\label{sec:protocol}

\subsection{Protocol convention and finite-prefix equalities}

The countable protocols considered in this paper proceed step by step along the same fixed LOCC protocol tree,
and each local operation may have finitely or countably many outcomes.
The average channels \(\Lambda_n\) corresponding to its finite prefixes
must converge in diamond norm to some CPTP channel \(\Lambda\):
\begin{equation}
\|\Lambda_n-\Lambda\|_\diamond\longrightarrow0.
\label{eq:channelconvergence}
\end{equation}
For a truncated protocol, we keep the outputs of branches that have terminated and discard the classical records of the other branches. We assume fixed finite-dimensional input and output spaces.
The finite prefixes are taken from a single LOCC protocol. We do not include limits of arbitrary sequences of finite-round LOCC protocols~\cite{LOCC}. Convergence of the output for a fixed input state does not imply the channel convergence in Eq.~\eqref{eq:channelconvergence}.

Local operations can be refined so that each outcome corresponds to a single
Kraus operator. The additional classical records can then be discarded.
Even if local auxiliary systems are used, each Kraus operator maps the original
three-dimensional space into a subspace of dimension at most three.
Using local isometries, we can apply the following argument
on the support of the state.
For countable protocols, the fixed finite-dimensional spaces used by the finite-prefix channels
and the coordinates of the various branches must all be explicitly specified in the protocol.
One cannot adjust the readout separately for different inputs or different truncation positions
and then infer channel convergence from this.

Let \(M\) be the entanglement of formation \(E_F\) or the negativity \(N\).
In the finite-dimensional case, both are continuous convex functions
and satisfy strong monotonicity under LOCC~\cite{Vidal,Negativity}.
If \(\rho\) and \(\sigma\) are mutually interconvertible, then \(M(\rho)=M(\sigma)\).
Fix the depth \(m\) of the protocol tree,
and denote the branch ensemble at this level by
\(\{p_\alpha,\rho_\alpha\}_{\alpha\in\mathcal T_m}\).
For any \(n\ge m\), we have
\begin{equation}
M(\rho)\ge\sum_{\alpha\in\mathcal T_m}p_\alpha M(\rho_\alpha)
\ge M(\Lambda_n\rho)\longrightarrow M(\sigma).
\label{eq:prefixequality}
\end{equation}
Since \(M(\rho)=M(\sigma)\), we let \(n\to\infty\) at fixed \(m\) and obtain
\[
\sum_{\alpha\in\mathcal T_m}p_\alpha M(\rho_\alpha)=M(\rho).
\]
Therefore, at every node reached with positive probability,
the local operations cause no average loss of \(M\).
The second inequality need not be an equality at finite \(n\),
since discarding the classical records may decrease \(M\).
This conclusion does not guarantee that any individual conditional state can be recovered separately,
nor does it require every terminal branch to output the same mixed target state \(\sigma\).

\subsection{Jordan parts and the auxiliary state}

The invariance of negativity also yields a pair of mutually convertible auxiliary states.
Writing the product Kraus operators of the original channel as \(A_i\otimes B_i\),
we have that \(\widetilde\Lambda=\Gamma\Lambda\Gamma\) is a CPTP channel
whose Kraus operators are \(\bar A_i\otimes B_i\).
Taking the complex conjugate of Alice's operators at each node of the protocol
yields a protocol that still has the original structure.
Since this transformation is a linear map on a finite-dimensional channel space,
it also preserves the convergence of channel sequences.

Decompose \(H=\rho^\Gamma\) as \(H=H_+-H_-\).
Since \(\widetilde\Lambda\) is trace-preserving,
the positive operators \(\widetilde\Lambda(H_+)\) and \(\widetilde\Lambda(H_-)\)
have the same traces as \(H_+\) and \(H_-\), respectively.
The invariance of negativity is equivalent to equality in the triangle inequality
for the trace norm of the difference of these two positive operators;
and equality holds if and only if their supports are mutually orthogonal.
Therefore,
\begin{equation}
\widetilde\Lambda(H_\pm)
=\bigl[(\Lambda\rho)^\Gamma\bigr]_\pm .
\label{eq:Jordantransport}
\end{equation}
If the local operations cause no average loss of negativity,
the above argument can be applied separately to each measurement outcome~\cite{Guo}.
For NPT initial and target states,
this shows that the auxiliary state defined by \(J(\rho)=H_-/N(\rho)\)
can also be bidirectionally converted via the corresponding conjugate protocol.
Therefore, Eq.~\eqref{eq:prefixequality} also applies to
the entanglement of formation \(E_F\) and the negativity \(N\) of the auxiliary state.
Its branch probabilities are
\begin{equation}
q_\alpha=p_\alpha\frac{N(\rho_\alpha)}{N(\rho)}.
\label{eq:auxprobability}
\end{equation}
It can be seen that the same branch may occur with different probabilities
for the original state and for the auxiliary state.
Before proving rigidity, one cannot directly set \(q_\alpha=p_\alpha\).

Moreover, the negative spectral subspace \(S_-\) contains no nonzero product vector,
and is therefore a completely entangled subspace.
This is because any product vector satisfies
\(\bra{a,b}\rho^\Gamma\ket{a,b}
=\bra{\bar a,b}\rho\ket{\bar a,b}\ge0\),
while any nonzero vector in \(S_-\) has a negative expectation value with respect to \(\rho^\Gamma\).

\subsection{The local contact equation and a kernel bridge}

Suppose the range of \(\rho\) is a completely entangled subspace,
and its local rank on Alice's side is three.
Consider a node in the protocol that preserves \(E_F\) on average,
and denote by \(F=M^\dagger M\) the effect operator corresponding to one of Alice's measurement outcomes.
Choose an optimal pure-state decomposition of the input state.
Since the average value of the entanglement of formation is invariant,
both the convex-roof inequality and the monotonicity inequality for the entanglement entropy of pure states must be equalities.
Therefore, every pure-state component with nonzero weight also makes the corresponding entropy inequality an equality.
Using the strict concavity of entropy on Bob's side, one obtains~\cite{Guo}
\begin{equation}
P_{U_j}FP_{U_j}=q_jP_{U_j},
\label{eq:contact}
\end{equation}
where \(U_j\) is the Schmidt support on Alice's side of the \(j\)-th pure-state component,
and \(q_j\) is the probability that this component yields this measurement outcome.
In the Schmidt basis, let \(\lambda_j\) be the positive diagonal matrix formed by the squares of the nonzero Schmidt coefficients;
then the invariance of the conditional reduced state can be written as
\(\sqrt{\lambda_j}(P_{U_j}FP_{U_j})^T\sqrt{\lambda_j}
=q_j\lambda_j\).
Since \(\lambda_j\) is invertible on its support,
canceling \(\sqrt{\lambda_j}\) on both sides yields Eq.~\eqref{eq:contact}.
If \(q_j=0\), the same conclusion follows directly from positivity.

If some pure-state component has Schmidt rank three, then \(F=qI\) follows immediately.
Otherwise, since the range of \(\rho\) contains no product vector,
all pure-state components have Schmidt rank two.
In three-dimensional space, these two-dimensional Schmidt subspaces pairwise have nonzero intersection,
so all the \(q_j\) must be equal; denote this common value by \(q\).
Suppose \(D=F-qI\ne0\).
By the eigenvalue interlacing theorem, and excluding the positive semidefinite and negative semidefinite cases,
we find that \(D\) has exactly one negative eigenvalue, one zero eigenvalue, and one positive eigenvalue.
Here, if \(D\) were positive semidefinite or negative semidefinite,
Eq.~\eqref{eq:contact} would require all \(U_j\) to be contained in \(\ker D\),
contradicting the full local rank on Alice's side.

Moreover, since \(DU_j\subset U_j^\perp\),
and \(U_j^\perp\) is only one-dimensional,
each pure-state component, after the action of \(D\otimes I\),
becomes either a product vector or the zero vector. Therefore,
\begin{equation}
(D\otimes I)\rho(D\otimes I)\in\SEP.
\label{eq:Dsep}
\end{equation}
Let \(H=\rho^\Gamma\) and \(X=D^T\otimes I\).
Taking the partial transpose of the above equation gives \(XHX\ge0\).
The equality condition for negativity under local operations further requires that
the supports of the positive and negative parts of \(H\) after the conjugate operation are mutually orthogonal, i.e.,
\(P_+(F^T\otimes I)P_-=0\),
and hence \(P_+XP_-=0\).
Therefore, for any \(z\in S_-\),
\(Xz\) has no component in the positive spectral subspace.
Combined with \(\bra{Xz}H\ket{Xz}\ge0\),
it also has no component in the negative spectral subspace,
and can only lie in the kernel of \(H\).
Thus,
\begin{equation}
XS_-\subset\ker H,\qquad P_-XP_-=0.
\label{eq:kernelbridge}
\end{equation}
That is, if the local effect operator is not proportional to the identity operator,
then the \(X\) constructed from it must map the negative spectral subspace \(S_-\)
into the kernel of the partial transpose.
For operations on Bob's side, the same conclusion is obtained by taking \(X=I\otimes D\).
These conditions depend only on the equality conditions on the given input state;
there is no need to assume that the operation also preserves the corresponding entanglement measure for other states.

\subsection{Mutual mixed LU reduces to LU}

Let \(\sigma=\sum_i p_iW_i\rho W_i^\dagger\),
and suppose that the reverse conversion is also implemented by a mixed local unitary channel.
By the convexity of the squared Hilbert--Schmidt norm,
the forward conversion gives \(\|\sigma\|_{\HS}^2\le\|\rho\|_{\HS}^2\),
while the reverse conversion gives the opposite inequality; hence the two are equal. Therefore,
\begin{equation}
\sum_i p_i\|W_i\rho W_i^\dagger-\sigma\|_{\HS}^2
=\|\rho\|_{\HS}^2-\|\sigma\|_{\HS}^2=0.
\label{eq:variance}
\end{equation}
Since every term on the left-hand side is nonnegative, for every \(p_i>0\) we have
\(W_i\rho W_i^\dagger=\sigma\),
i.e., \(\rho\lueq\sigma\);
see Lemma~3 of Ref.~\cite{deVicente}.
The same argument also applies to positive operators with equal traces,
and hence can also be used for the comparison of unnormalized retained blocks below.
It should be emphasized that both the forward and reverse conversion channels must be mixed local unitary channels.

\section{Protected subspaces and product Kraus operators}
\label{sec:protected}

\subsection{A pure fixed point protects its Schmidt rectangle}

If a separable channel leaves a pure state of Schmidt rank two invariant,
the trace-preserving condition restricts the action of the Kraus operators
on the product subspace of the Schmidt supports on both sides.
In the following, this subspace is called the Schmidt product subspace
(or Schmidt rectangle).

\begin{smlemma}\label{lem:pureprotection}
Let a separable CPTP channel \(\Omega\) leave a pure state \(\proj\psi\) invariant,
where \(\psi\) has Schmidt rank two,
and the Schmidt subspaces on the two sides are \(U\) and \(V\), respectively.
Then, for any finite product Kraus representation of this channel,
the operators on \(L=U\otimes V\) all satisfy
\begin{equation}
K_i|_L=\sqrt{p_i}\,W_i,\qquad
W_i\text{ is a product unitary operator on }L,\qquad
\sum_i p_i=1.
\label{eq:protected}
\end{equation}
When \(p_i=0\), the action of \(K_i\) on \(L\) is zero.
Moreover, \(\Omega(P_L)=P_L\).
\end{smlemma}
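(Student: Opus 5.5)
The plan is to combine the purity of the fixed point with the trace-preserving condition compressed to \(L\), and to force equality through a determinant/AM--GM bound.

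\emph{Step 1: Kraus operators act diagonally on \(\psi\).} Write the finite product Kraus representation as \(K_i=A_i\otimes B_i\). Since \(\sum_i K_i\proj\psi K_i^\dagger=\proj\psi\) is a rank-one sum of positive terms, each term is proportional to \(\proj\psi\). Hence \(K_i\ket\psi=c_i\ket\psi\) for some \(c_i\in\mathbb C\). Taking the expectation of \(\sum_iK_i^\dagger K_i=I\) in \(\ket\psi\) gives \(\sum_i|c_i|^2=1\).

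\emph{Step 2: translate to the coefficient matrix.} Identify \(\psi\) with its coefficient matrix \(\Psi\), so that \((A\otimes B)\ket\psi\) corresponds to \(A\Psi B^T\). Then \(\Psi\) has rank two, \(\ran\Psi=U\), and \(\ran\Psi^T=V\). Define the compressions \(A_i'=P_UA_iP_U\) on \(U\) and \(B_i'=P_VB_iP_V\) on \(V\). Because \(\Psi=P_U\Psi P_V^T\), compressing \(A_i\Psi B_i^T=c_i\Psi\) by \(P_U\) on the left and \(P_V^T\) on the right gives
\[
A_i'\widetilde\Psi B_i'^{\,T}=c_i\widetilde\Psi,
\]
where \(\widetilde\Psi\) is the invertible \(2\times2\) block of \(\Psi\). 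Taking determinants yields \(\det A_i'\det B_i'=c_i^2\).

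\emph{Step 3: trace inequality and equality.} Compress the trace-preserving condition to \(L\), so that \(\sum_iP_LK_i^\dagger K_iP_L=P_L\), and take the trace. Since \(P_LK_i^\dagger K_iP_L\) factorizes as \(P_UA_i^\dagger A_iP_U\otimes P_VB_i^\dagger B_iP_V\), this gives
\[
4=\sum_i\Tr(P_UA_i^\dagger A_iP_U)\,\Tr(P_VB_i^\dagger B_iP_V).
\]
Each factor is bounded below in two stages. First, \(\Tr(P_UA_i^\dagger A_iP_U)\ge\Tr(A_i'^\dagger A_i')\), with equality iff \(A_iU\subset U\). Second, for a \(2\times2\) matrix the AM--GM inequality on singular values gives \(\Tr(A_i'^\dagger A_i')\ge2|\det A_i'|\), with equality iff \(A_i'\) is a scalar multiple of a unitary. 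The same bounds hold for \(B_i\). Combining them, the right-hand side is at least \(4\sum_i|c_i|^2=4\), so every inequality is an equality. For \(c_i\neq0\), this means \(A_i\) maps \(U\) into \(U\) and \(B_i\) maps \(V\) into \(V\), each proportional to a unitary there. Hence \(K_i|_L=\sqrt{p_i}W_i\) with \(W_i\) a product unitary on \(L\) and \(p_i=|c_i|^2\), so that \(\sum_ip_i=1\). For \(c_i=0\), equality forces \(\Tr(P_UA_i^\dagger A_iP_U)\Tr(P_VB_i^\dagger B_iP_V)=0\), so \(A_i|_U=0\) or \(B_i|_V=0\), i.e.\ \(K_i|_L=0\).

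\emph{Step 4: conclusion.} It follows that \(\Omega(P_L)=\sum_ip_iW_iP_LW_i^\dagger=P_L\), since each \(W_i\) is unitary on \(L\).

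The main obstacle is ruling out leakage of \(A_iU\) outside \(U\) (and likewise for \(B_i\)). Step 2 alone only controls the compressions \(A_i'\) and \(B_i'\), not where \(A_i\) sends \(U\). I handle this by keeping the leakage term \(\|P_U^\perp A_iP_U\|_{\HS}^2\) inside the trace budget, so that saturation of the budget kills it. One further point to check is that the identification \((A\otimes B)\ket\psi\leftrightarrow A\Psi B^T\) and the Schmidt subspaces are taken in matching conventions, so that \(V\) is indeed \(\ran\Psi^T\).
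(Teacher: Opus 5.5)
Your proposal is correct and follows essentially the paper's route: purity of the fixed point gives \(K_i\psi=c_i\psi\), the coefficient-matrix equation gives \(|\det A_i'\det B_i'|=|c_i|^2\), and AM--GM on singular values saturates the trace budget \(\Tr P_L=4\). The difference is minor. The paper gets \(A_iU=U\) and \(B_iV=V\) for \(c_i\neq0\) directly, since \(U=\ran(c_i\Psi)\subseteq A_iU\) and \(\dim A_iU\le2\), so your ``main obstacle'' is immediate there. You instead absorb the leakage \(\|P_U^\perp A_iP_U\|_{\HS}^2\) into the budget and split AM--GM into two \(2\times2\) factors, which works equally well.
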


\begin{proof}
Since the output is the pure state \(\proj\psi\),
every Kraus operator, when acting on \(\psi\), must yield a vector proportional to \(\psi\).
Hence \(K_i\psi=c_i\psi\).
Let \(p_i=|c_i|^2\); by the trace-preserving property, \(\sum_i p_i=1\).

For terms with \(p_i>0\),
the local factors of \(K_i=A_i\otimes B_i\) map \(U\) and \(V\) back into themselves, respectively.
Writing \(K_i\psi=c_i\psi\) as an equality of coefficient matrices in the Schmidt basis
and then taking determinants gives
\(|\det(A_i|_U)\det(B_i|_V)|=p_i\).
Applying the arithmetic--geometric mean inequality to the squares of the four singular values of \(K_i|_L\),
and using the trace-preserving condition, we obtain
\begin{equation}
4=\sum_i\|K_iP_L\|_{\HS}^2
\ge4\sum_{p_i>0}p_i=4.
\label{eq:tracefour}
\end{equation}
Both sides are equal, so every positive-weight term must make the above inequality an equality,
and all zero-weight terms must have zero norm.
Therefore, the positive-weight terms are all scalar multiples of product unitary operators on \(L\),
while the zero-weight terms vanish identically on all of \(L\).
\end{proof}

If the pure state under consideration is the normalized negative part of the partial transpose,
then, when transferring the conclusion from the conjugate channel back to the original channel,
one needs to take the complex conjugate of the Schmidt subspace on Alice's side.

\subsection{Coherence extends protection with the same weights}

The following lemma extends the protection from a given subspace to
other subspaces coupled to the channel-invariant state.
The extension does not change the Kraus representation or the weights of the individual terms.
This will later be used to derive properties of the two conversion channels
from the properties of the round-trip channel.

\begin{smlemma}\label{lem:extension}
Let \(\Omega(\omega)=\omega\ge0\),
and suppose that its Kraus representation satisfies Eq.~\eqref{eq:protected} on \(L\),
where the unitary operators need not be of product form.
Then the protection can be extended with the original weights to
\begin{equation}
E=\Span\{\omega^kL:k\ge0\},\qquad
K_i|_E=\sqrt{p_i}\,\widehat W_i ,
\label{eq:extension}
\end{equation}
where \(\widehat W_i\) are unitary operators on \(E\).
When \(p_i=0\), \(K_i\) vanishes identically on \(E\).
Here \(L\subset\supp\omega\) is not required.
\end{smlemma}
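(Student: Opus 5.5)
The plan is to argue by induction on $k$, using the fixed-point equation $\Omega(\omega)=\omega$ and the trace-preserving identity $\sum_i K_i^\dagger K_i=I$. Set $E_k=\Span\{\omega^jL:0\le j\le k\}$, so $E_0=L$ and the $E_k$ stabilize at $E$ (finite dimension). I will show inductively that every $K_i$ acts on $E_k$ as $\sqrt{p_i}$ times an isometry, with the same weights $p_i$. The key point is that such an isometry property for each $i$ is equivalent to the operator identity
\[
P_{E_k}K_i^\dagger K_iP_{E_k}=p_iP_{E_k}\quad\text{for all }i .
\]
On $L$ this holds by Eq.~\eqref{eq:protected}. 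Once the identity holds on $E$, the map $\widehat W_i$ in Eq.~\eqref{eq:extension} is simply $K_i|_E/\sqrt{p_i}$. It is an isometry from $E$ into the output space, and I would call it unitary onto its image. If an honest unitary on $E$ is needed, I would also show $K_iE\subset E$; see below.

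To transfer the identity from $E_k$ to $E_{k+1}$, I would use the standard Koashi--Imoto / fixed-point correspondence. For a CPTP $\Omega$ with full-rank-on-its-support fixed point $\omega$, the dual (Heisenberg) map $\Omega^\dagger$ acts multiplicatively on the fixed-point algebra associated with $\omega$. A cleaner route avoids the general theory and computes directly. The hypothesis says that $X\mapsto\Omega^\dagger$ fixes $P_L$ in the sense that $P_L\Omega^\dagger(I)P_L=P_L$ and that each term individually satisfies $P_LK_i^\dagger K_iP_L=p_iP_L$.

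The direct computation goes as follows. Take $x\in E_k$ and consider the vector $\omega x$. Write $\omega=\sum_i K_i\omega K_i^\dagger$ and use the isometry property on $E_k$ to relate $K_i\omega x$ to $\omega K_i x$. The concrete plan is to show the intertwining relation
\[
K_i\,\omega\,P_{E_k}=\omega\,K_iP_{E_k}
\]
(up to the factor conventions), and then to deduce
\[
\langle\omega x,K_i^\dagger K_i\,\omega y\rangle=\langle K_i^\dagger\,\omega K_i x,\omega y\rangle-\cdots=p_i\langle\omega x,\omega y\rangle .
\]
To obtain the intertwining, I would apply the equality case of a Cauchy--Schwarz / Kadison-type inequality. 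Define $A=\Omega^\dagger(P_{E_k}\text{-compressed})$ and compare
\[
\Tr\bigl[\omega\,\Omega^\dagger(Y)^\dagger\Omega^\dagger(Y)\bigr]\le\Tr\bigl[\omega\,\Omega^\dagger(Y^\dagger Y)\bigr]=\Tr\bigl[\Omega(\omega)Y^\dagger Y\bigr]=\Tr[\omega Y^\dagger Y].
\]
Equality in this chain, forced by the invariance of $\omega$ together with the exact isometry on $E_k$, gives $K_i Y=\Omega^\dagger(Y)$-type multiplicativity for each branch on $\supp\omega$. That in turn yields the intertwining with $\omega$. The same relation shows $K_i\omega^jL\subset\omega^jK_iL$, which together with $K_iL\subset L$ (when $W_i$ maps $L$ to itself) gives $K_iE\subset E$, so that $\widehat W_i$ is a genuine unitary on $E$. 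Terms with $p_i=0$ vanish on $L$, so by the intertwining they vanish on every $\omega^jL$ as well.

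The main obstacle is exactly this intertwining step. Two features make it delicate: $L$ need not lie in $\supp\omega$, and the $W_i$ need not preserve $L$. The first I would handle by noting that components of $L$ in $\ker\omega$ contribute nothing to $\omega^jL$ for $j\ge1$. The second I would handle by working with the output-side images $K_iL$, carrying the argument on the pair (input subspace, output subspace) rather than on a single invariant subspace.
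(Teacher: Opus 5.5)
You have correctly located the crux, but the proposal does not supply it, so there is a genuine gap. The Kadison--Schwarz chain $\Tr[\omega\,\Omega^\dagger(Y)^\dagger\Omega^\dagger(Y)]\le\Tr[\omega Y^\dagger Y]$ holds for every $Y$. You never name a $Y$ for which equality is forced. Moreover, the standard consequence of equality is that the $K_i$ commute with fixed points of $\Omega^\dagger$ when the invariant state is faithful. That is not the intertwining $K_i\omega P_{E_k}=\omega K_iP_{E_k}$ you want, because $\omega$ is a fixed point of $\Omega$, not of $\Omega^\dagger$.

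Two ingredients are missing before any such argument can run. First, you need a faithful invariant operator on a space containing $E_k$. Here $\omega$ is not one when $L\not\subset\supp\omega$, and noting that the $\ker\omega$-components of $L$ drop out of $\omega^jL$ does not control what $K_i$ does on those components. The paper uses $\Theta=\omega+P_L$, which is invariant since Eq.~\eqref{eq:protected} gives $\Omega(P_L)=P_L$, and deduces $K_iS\subset S$ for $S=\supp\Theta$. Second, you need the no-inflow (block-diagonal) property. For that, the induction must carry $K_iE_k\subset E_k$ and $\Omega(P_{E_k})=P_{E_k}$, not only $P_{E_k}K_i^\dagger K_iP_{E_k}=p_iP_{E_k}$ as you propose. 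Write $K_i|_S=\bigl(\begin{smallmatrix}C_i&B_i\\0&D_i\end{smallmatrix}\bigr)$ with respect to $S=E_k\oplus Q_k$. The $Q_k$-block of $\Omega(\Theta)=\Theta$ together with trace preservation gives $\sum_i\Tr(B_i\Theta_{Q_k}B_i^\dagger)=0$, hence $B_i=0$.

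With these in hand your route can be completed. Expand $\sum_i\|(K_i\omega-\omega K_i)P_{E_k}\|_{\HS}^2$ and use $\sum_iK_i^\dagger K_i=I$, $\Omega(P_{E_k})=P_{E_k}$ and $\Omega(P_{E_k}\omega)=P_{E_k}\omega$ (the last from block-diagonality); the sum vanishes. Even then, the step hidden behind ``$-\cdots$'' needs a separate argument. With $K_i|_{E_k}=\sqrt{p_i}U_i$, you must show each $U_i$ commutes with $P_{E_k}\omega P_{E_k}$ and with $P_{E_k}\omega^2P_{E_k}$. This follows because $\sum_ip_iU_i(\cdot)U_i^\dagger$ fixes the first and its adjoint fixes the second, combined with the strict-convexity argument of Eq.~\eqref{eq:variance}.

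The paper sidesteps both steps. It notes that $X_k+X_k^\dagger$, with $X_k=P_{Q_k}\omega P_{E_k}=V_kN_k$, is invariant, and so are its Jordan parts. The support of the positive part is the graph $\{f+V_kf:f\in F_k\}$, and invariance of this graph under block-diagonal $K_i$ gives $K_i|_{R_k}=V_k(K_i|_{F_k})V_k^\dagger$. That is, the $K_i$ intertwine with the polar isometry rather than with $\omega$.

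Finally, your concern that the $W_i$ might not preserve $L$ is unfounded. Eq.~\eqref{eq:protected} makes them unitaries on $L$, so $K_iL\subset L$; only the product form is relaxed.
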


\begin{proof}
\(\Theta=\omega+P_L\) is invariant under the channel
and is strictly positive on its support \(S=\supp\Theta\).
Therefore, each \(K_i\) leaves \(S\) invariant.
Let \(E_0=L\).
Suppose the protection has been extended to \(E_k\), and that
\(\Omega(P_{E_k})=P_{E_k}\).
With respect to the decomposition \(S=E_k\oplus Q_k\), write
\(K_i|_S=\left(\begin{smallmatrix}C_i&B_i\\0&D_i\end{smallmatrix}\right)\).
Examining the block of \(\Omega(\Theta)=\Theta\) on \(Q_k\)
and using the trace-preserving condition, we obtain
\begin{equation}
\sum_i\Tr\bigl(B_i\Theta_{Q_k}B_i^\dagger\bigr)=0.
\label{eq:faithfulnoflow}
\end{equation}
Since \(\Theta_{Q_k}\) is strictly positive on \(Q_k\),
all \(B_i\) must vanish.
This is precisely the step in Lemma~2 of Koashi--Imoto that rules out inflow from outside~\cite{KI}.
Therefore, each \(K_i|_S\) is block diagonal with respect to \(E_k\oplus Q_k\).

Take the polar decomposition
\begin{equation}
X_k=P_{Q_k}\omega P_{E_k}=V_kN_k,\qquad
F_k=\supp N_k,\quad R_k=\ran X_k.
\label{eq:polar}
\end{equation}
Since \(L\subset E_k\), the corresponding off-diagonal block of \(\Theta\) is also equal to \(X_k\).
The operator \(X_k+X_k^\dagger\) is invariant under the channel,
and the positive and negative parts in its Jordan decomposition are also invariant,
so their supports are invariant subspaces of each \(K_i\).
The support of the positive part is the graph subspace
\(\{f+V_kf:f\in F_k\}\).
Since \(K_i\) is block diagonal with respect to \(E_k\oplus Q_k\),
the invariance of this graph subspace implies that \(K_iF_k\subset F_k\) and
\begin{equation}
K_i|_{R_k}=V_k(K_i|_{F_k})V_k^\dagger .
\label{eq:matching}
\end{equation}
Alternatively, one may apply Theorem~2 of Koashi--Imoto
(Eq.~(22) in the preprint) to the Stinespring isometry
\(\mathcal Vx=\sum_i K_ix\otimes\ket i\),
and then compare the coefficients of the environment basis vectors \(\ket i\) one by one to obtain the same conclusion.
The entire derivation uses the same Kraus representation.

When \(p_i>0\), the finite-dimensional unitary operator
\(K_i|_{E_k}/\sqrt{p_i}\) leaves \(F_k\) invariant,
and its restriction to \(F_k\) is still a unitary operator.
By Eq.~\eqref{eq:matching},
the restriction of \(K_i\) to \(R_k\) is also \(\sqrt{p_i}\) times a unitary operator.
When \(p_i=0\), this equation gives \(K_i|_{R_k}=0\).
Therefore, the protection is extended with the original weights to
\(E_{k+1}=E_k\oplus R_k=E_k+\omega E_k\).
Since the space is finite-dimensional, the iteration stops growing after finitely many steps,
and one finally obtains the subspace \(E\) in Eq.~\eqref{eq:extension}.
Moreover, \(E\) is an invariant subspace of the Hermitian operator \(\omega\),
and hence \([P_E,\omega]=0\).
\end{proof}

This lemma gives a conclusion for a fixed Kraus representation.
To obtain from it a local flag that both parties can read,
one still needs to use the product form of the Kraus operators.
The tensor factors in the abstract Koashi--Imoto decomposition
do not necessarily correspond to the actual Alice--Bob bipartition.

\subsection{Qutrit product rigidity and one-sided completion}

For product Kraus operators, the local dimensions determine how the protection can be extended.
Each three-level system has only one dimension left outside a two-dimensional Schmidt subspace,
so the extension of protection either covers a product subspace that is the full three-dimensional space on one side,
or makes every Kraus operator proportional to a product unitary operator on the entire space.

\begin{smlemma}\label{lem:productrigidity}
Let \(L=U\otimes V\) be a \(2\times2\) product subspace,
and suppose that product Kraus operators satisfy Eq.~\eqref{eq:extension} on \(E\supset L\).
If the local rank of \(E\) is \((3,3)\),
then every nonzero Kraus operator is proportional to a product unitary operator on the entire space.
If the local rank is \((2,3)\),
then the protection extends with the same weights to \(T=U\otimes\mathbb C^3\);
the \((3,2)\) case is handled by exchanging the two parties.
If coherence makes the protection extend further beyond \(T\), then global rigidity follows.
\end{smlemma}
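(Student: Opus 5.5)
The plan is to work directly with the product structure \(K_i=A_i\otimes B_i\) and the relation \(K_i|_E=\sqrt{p_i}\,\widehat W_i\) from Eq.~\eqref{eq:extension}. For \(p_i=0\) the operator \(K_i\) vanishes on \(E\supset L\), so \(A_i|_U\otimes B_i|_V=0\) and one local factor vanishes on its two-dimensional subspace. For \(p_i>0\), Lemma~\ref{lem:pureprotection} already gives \(A_i|_U\) and \(B_i|_V\) proportional to isometries onto \(U\) and \(V\). Rescaling, I will take \(A_i|_U=\alpha_i u_i\) and \(B_i|_V=\beta_i v_i\) with \(u_i,v_i\) unitary and \(|\alpha_i\beta_i|^2=p_i\).

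\textbf{Step 1: full local rank on Alice's side.} Suppose the local rank of \(E\) is three on Alice's side. Then \(E\) contains a vector with a nonzero component along \(n\otimes y\), where \(n\perp U\). Since \(E\) is a subspace and \(K_i\) acts as a scaled isometry on all of \(E\), the norm identity
\[
\|K_ix\|^2=p_i\|x\|^2,\qquad x\in E,
\]
must hold. Apply it to vectors \(x=\ket{e}\otimes\ket{f}+\ket{g}\) with \(e\in U\), \(f\in V\), and \(g\) taken from \(E\ominus L\) so that it has a component along \(n\). Expanding the norm and comparing the coefficients of \(\|g\|^2\) and of the cross terms will force \(A_i n\perp A_iU\) and \(\|A_in\|=|\alpha_i|\). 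This uses that \(\Span\{\omega^kL\}\) is generated by applying \(\omega\) to product vectors, so the cross terms vary independently over \(U\).

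The step needs care because \(E\) need not contain a product vector involving \(n\). I would handle this by polarization: hold the \(n\)-component fixed and let the \(U\otimes V\) part range over all of \(L\), which is possible because \(L\subset E\). The cross terms \(\langle A_ie\otimes B_if,\,K_ig\rangle\) must equal \(p_i\langle e\otimes f,g\rangle\) for every \(e\otimes f\), and since \(L\) is a tensor product this splits into local conditions. The conclusion is that \(A_i/\alpha_i\) is an isometry on \(\mathbb C^3_A\), hence unitary by dimension counting. This is where I expect the main obstacle: disentangling the local factors when \(g\) is entangled across \(n\otimes\mathbb C^3\) and \(U\otimes m\). The tool I would try first is the singular-value (AM--GM) equality trick of Lemma~\ref{lem:pureprotection}, applied to the compression of \(K_i\) onto \(\Span(L,g)\).

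\textbf{Step 2: the two cases.} If the local rank is \((3,3)\), the argument of Step 1 applies on both sides, so \(K_i=\alpha_i\beta_i\,(u_i'\otimes v_i')\) is proportional to a product unitary. Summing \(\sum_i p_i=1\) then confirms the claimed weights. If the local rank is \((2,3)\), Step 1 applies on Bob's side only, so \(B_i/\beta_i\) is unitary on \(\mathbb C^3_B\). Consequently \(K_i|_{U\otimes\mathbb C^3}=\alpha_iu_i\otimes\beta_iB_i/\beta_i\) is \(\sqrt{p_i}\) times a unitary on \(T=U\otimes\mathbb C^3\), and zero when \(p_i=0\). This is the one-sided completion.

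\textbf{Step 3: extension beyond \(T\).} If coherence extends the protection past \(T\), apply Lemma~\ref{lem:extension} with \(T\) in place of \(L\). The resulting subspace has local rank \((3,3)\), and it contains \(L\). Step 2 then yields global rigidity.
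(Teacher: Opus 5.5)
Your Step~1 tries to show, one operator at a time, that \(A_i/\alpha_i\) is unitary on \(\mathbb C^3_A\), using only the isometry relation \(\|K_ix\|^2=p_i\|x\|^2\) on \(E\). That per-operator claim is false. Take \(U=V=\Span\{\ket0,\ket1\}\) and \(E=L\oplus\mathbb C\ket{22}\); this is \(\Span\{\omega^kL\}\) for \(\omega=\proj\phi\), \(\phi=\ket{00}+\ket{22}\). The operator \(K=\mathrm{diag}(1,1,2)\otimes\mathrm{diag}(1,1,\tfrac12)\) is the identity on \(E\), and \(E\) has local rank \((3,3)\), yet neither factor is proportional to a unitary. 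Even your intermediate claim \(A_in\perp A_iU\) can fail. With \(A=\ket0\bra0+\ket1\bra1+(\ket0+2\ket2)\bra2\) and \(B=\ket0\bra0+\ket1\bra1+\tfrac12(\ket2-\ket0)\bra2\), the operator \(A\otimes B\) is the identity on \(E=L\oplus\mathbb C(\ket{02}+\ket{20}+\ket{22})\). Your polarization over \(L\) only reproduces the no-inflow condition \(P_LK_ig=0\). When \(E\ominus L\) has a corner component, the isometry on \(E\) in general fixes only the product of the two corner entries of the local factors, not each one. The AM--GM trick on the compression to \(\Span(L,g)\) gives nothing, because on \(E\) all singular values of \(K_i\) already equal \(\sqrt{p_i}\). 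The missing information lies outside \(E\).

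What is needed is the completeness relation \(\sum_iK_i^\dagger K_i=I\) on the whole nine-dimensional space, which your argument never uses. Write \(K_i=\sqrt{p_i}(\begin{smallmatrix}U_i&x_i\\0&\alpha_i\end{smallmatrix})\otimes(\begin{smallmatrix}V_i&y_i\\0&\beta_i\end{smallmatrix})\), where \(\alpha_i,\beta_i\) now denote the corner entries, not your scale factors. Unitarity of the quotient action on \(E\ominus L\) gives \(|\alpha_i\beta_i|=1\): either the corner functional is a left eigenvector, or, if the corner projection vanishes, the two wing intertwiners give \(|\alpha_i|=|\beta_i|=1\). Hence \(|\det K_i|^{2/9}=p_i\). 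Then \(9=\sum_i\Tr K_i^\dagger K_i\ge9\sum_i|\det K_i|^{2/9}=9\) forces \(K_i^\dagger K_i=p_iI\) for every term, which is exactly what excludes the examples above. The same equality also forces the zero-weight terms to vanish on the entire space. You need that too, since \(p_i=0\) only gives \(K_i|_E=0\).

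The same omission affects your \((2,3)\) case. There, your per-operator conclusion about Bob's factor is correct, because \(E\ominus L\subset U\otimes m\) is a pure wing. But ``zero when \(p_i=0\)'' on \(T\) does not follow from \(K_i|_E=0\): take \(A_i=\ket0\bra1\), \(B_i=\ket0\bra2\) and \(E=L\oplus\mathbb C\ket{02}\). You need the trace count \(\sum_i\Tr(P_TK_i^\dagger K_iP_T)=6\), to which the positive-weight terms already contribute \(6\sum_ip_i=6\). With these repairs, your Step~3 reduction to the \((3,3)\) case is a legitimate substitute for the paper's direct argument (\(x_i=0\), \(|\alpha_i|=1\)).
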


\begin{proof}
For terms with \(p_i>0\), by suitably adjusting the scalar factor between the two local factors,
we can write
\begin{equation}
K_i=\sqrt{p_i}
\begin{pmatrix}U_i&x_i\\0&\alpha_i\end{pmatrix}
\otimes
\begin{pmatrix}V_i&y_i\\0&\beta_i\end{pmatrix},
\label{eq:triangular}
\end{equation}
where \(U_i,V_i\) are two-dimensional unitary operators.
Let \(n,m\) span the orthogonal complements of \(U,V\), respectively.
In the quotient space modulo \(L\),
ordered as \(Um,nV,nm\), and after removing the factor \(\sqrt{p_i}\), we obtain
\begin{equation}
\begin{pmatrix}
\beta_iU_i&0&\beta_i x_i\\
0&\alpha_iV_i&\alpha_i y_i\\
0&0&\alpha_i\beta_i
\end{pmatrix}.
\label{eq:quotient}
\end{equation}
The restriction of this operator to \(E\ominus L\) is a unitary operator.

First suppose that the local rank of \(E\) is \((3,3)\).
If the projection of \(E\ominus L\) onto the corner \(nm\) is nonzero,
then the linear functional extracting the corner component is a left eigenfunctional,
with corresponding eigenvalue \(\alpha_i\beta_i\).
By unitarity, \(|\alpha_i\beta_i|=1\).
If the corner projection is zero, then full local rank requires that
the projections of \(E\ominus L\) onto the two wings are both nonzero.
Taking the Hilbert--Schmidt norm of the unitary intertwining relations satisfied by these two projections gives
\(|\alpha_i|=|\beta_i|=1\).
Therefore, in either case, \(|\det K_i|^{2/9}=p_i\), and hence
\begin{equation}
9=\sum_i\Tr K_i^\dagger K_i
\ge9\sum_{p_i>0}|\det K_i|^{2/9}=9.
\label{eq:tracenine}
\end{equation}
Both sides are equal, so all singular values of every positive-weight term must be equal,
and every zero-weight term must vanish on the entire space.
Combined with the product structure, both local factors are also proportional to unitary operators.
Here we have used the determinant--trace inequality from Ref.~\cite{GheorghiuGour}
and proved that equality holds for the present Kraus representation.

Now suppose that the local rank of \(E\) is \((2,3)\).
In this case, \(E\ominus L\) is a nonzero subspace of \(U\otimes m\).
From the invariance of this subspace and the vanishing of inflow into \(L\), we obtain \(y_i=0\);
from the preservation of the norm, we obtain \(|\beta_i|=1\).
Therefore, all positive-weight terms are proportional to product unitary operators on \(T\).
These terms already contribute the entire trace \(6\sum_i p_i=6\),
so the trace-preserving condition requires the zero-weight terms to vanish identically on \(T\).

If the protection extends further beyond \(T\),
the newly added part is a nonzero subspace of \(n\otimes\mathbb C^3\).
In this case, the positive-weight terms can be written as
\(\sqrt{p_i}\left(\begin{smallmatrix}U_i&x_i\\0&\alpha_i\end{smallmatrix}\right)
\otimes V_i\), where \(V_i\) is a three-dimensional unitary operator.
Since the newly added part cannot flow into \(T\), we must have \(x_i=0\);
and from the preservation of the norm on the newly added subspace, we obtain \(|\alpha_i|=1\).
Finally, the trace-preserving condition again requires the zero-weight terms to vanish identically on the entire space.
\end{proof}

\subsection{Passing rigidity back to the given channels}

Let \(\{L_a\}\) and \(\{M_b\}\) be finite product Kraus representations of the given channels \(\Lambda\) and \(\Psi\), respectively.
Apply the rigidity lemma to the Kraus representation \(\{M_bL_a\}_{b,a}\) of the composite channel.
If global rigidity holds, then \(K_{ba}=M_bL_a\) satisfies
\(K_{ba}^\dagger K_{ba}=K_{ba}K_{ba}^\dagger=t_{ba}I\).
Summing over \(b\) and using the trace-preserving property of \(\Psi\), we obtain
\begin{equation}
L_a^\dagger L_a=\Bigl(\sum_b t_{ba}\Bigr)I.
\label{eq:factorback}
\end{equation}
Therefore, \(\Lambda\) is a mixed local unitary channel, and \(\Lambda(I)=I\).
Using this property and summing over \(a\), we obtain
\(M_bM_b^\dagger=(\sum_a t_{ba})I\).
Thus, both given channels are mixed local unitary channels.
If the coefficient in Eq.~\eqref{eq:factorback} is zero,
then \(L_a\) vanishes identically on the entire space.
For \(M_b\), the same conclusion follows using \(\Lambda(I)=I\).
Hence we obtain not only the local unitary equivalence of the initial and target states,
but also the structure of the two channels themselves.

Next consider the corresponding statement on subspaces.
Suppose that both channels leave the same pure state of Schmidt rank two invariant,
and that the composite channel protects the subspace \(T=U\otimes\mathbb C^3\),
where \(T\) contains the Schmidt product subspace of this pure state. In this case,
\begin{equation}
P_TL_a^\dagger L_aP_T=q_aP_T.
\label{eq:subfactorback}
\end{equation}
If \(q_a>0\), then \(L_a\psi\ne0\).
Since \(\proj\psi\) is a fixed point of the channel,
the Alice factor of \(L_a\) must map \(U\) back into itself.
Combining this with the tensor product structure in Eq.~\eqref{eq:subfactorback},
we find that the restriction of the Alice factor to \(U\) and the action of the Bob factor on \(\mathbb C^3\)
are both proportional to unitary operators.
Zero-weight terms vanish on \(T\), so \(\Lambda(P_T)=P_T\).
This gives
\(M_bP_TM_b^\dagger=\sum_aK_{ba}P_TK_{ba}^\dagger=r_bP_T\),
so \(\Psi\) also protects \(T\).

For countable protocols, the closedness of the finite-dimensional separable cone guarantees that
the Choi operator of the limit channel remains separable.
By Carath\'eodory's theorem, this channel can then be written in a finite product Kraus representation.
This representation is used only for algebraic derivations concerning separable channels,
and does not imply that the channel can be implemented by a finite-round LOCC protocol.

\section{Rigidity of completely entangled cores}\label{sec:core}

We now prove the rigidity of CES states.
When the local rank is full, one can further determine that
the two given conversion channels are both mixed local unitary channels.
Section~\ref{sec:shell} will use this conclusion
to extend the analysis of core states to original states containing a separable part.

\begin{proposition}[Rigidity of core states and structure of the conversion channels]
\label{prop:core}
For CES states \(\delta\) and \(\delta'\),
\(\delta\rev\delta'\) if and only if \(\delta\lueq\delta'\).
If one of the states has local rank \((3,3)\),
then the given forward and reverse conversion channels
are both mixed local unitary channels acting on the entire space.
\end{proposition}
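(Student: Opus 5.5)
The "if" direction is immediate, since a product unitary and its inverse are finite LOCC protocols. For the "only if" direction, fix LOCC channels \(\Lambda,\Psi\) with \(\Lambda(\delta)=\delta'\) and \(\Psi(\delta')=\delta\). Let \(\{L_a\}\) and \(\{M_b\}\) be finite product Kraus representations of these channels, obtained via Carath\'eodory's theorem for countable protocols. The goal is to show that the round-trip channel \(\Omega=\Psi\Lambda\), which fixes \(\delta\), has every nonzero Kraus operator \(M_bL_a\) proportional to a product unitary on the whole space. Once this is shown, the argument of the subsection on passing rigidity back, Eq.~\eqref{eq:factorback}, makes \(\Lambda\) and \(\Psi\) mixed-LU channels. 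The Hilbert--Schmidt variance argument, Eq.~\eqref{eq:variance}, then gives \(\delta\lueq\delta'\).

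\textbf{Reduction.} I split cases by local rank and negativity.

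First suppose neither \(\delta\) nor \(\delta'\) has local rank \((3,3)\). Local ranks are invariant under reversible conversion, as one checks from the supports of the reduced states transported by local Kraus operators. Take the case where Alice's local rank is two, the other case being symmetric. Then \(\delta\) lives in \(\mathbb C^2\otimes\mathbb C^3\) or smaller. Since the range is a CES containing no product vector, the rank of \(\delta\) is at most \(2\). If it were \(1\), the state would be pure, and Nielsen's theorem applies directly. Otherwise the \(2\times3\) rank-two case is handled with Kronecker matrix pencils, as announced in the main text~\cite{Pencils}: the pencil's canonical form is an LU invariant that must be preserved by the Kraus operators restricted to the support.

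So the substantive case is local rank \((3,3)\). Here I work node by node along the protocol tree, using the finite-prefix equalities of Eq.~\eqref{eq:prefixequality} for \(E_F\). At each node, take a local effect \(F\). By the contact equation~\eqref{eq:contact}, either some optimal pure component has Schmidt rank three, in which case \(F=qI\), or \(D=F-qI\) has inertia \((1,1,1)\). In the latter case we get the separability relation~\eqref{eq:Dsep} and the kernel bridge~\eqref{eq:kernelbridge}.

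Now split on the dimension of the negative spectral subspace \(S_-\) of \(\delta^\Gamma\).

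\textbf{NPT with \(\dim S_-\ge2\).} The bridge gives \(P_-XP_-=0\) and \(P_+XP_-=0\). Together with the first condition \(P_{\supp\delta}(D\otimes I)P_{\supp\delta}=0\), which I expect to derive from the trace-preserving contact equations applied to every pure component spanning the support, this contradicts Lemma~\ref{lem:joint}. Hence every local effect is proportional to the identity, so every Kraus operator is a scaled unitary at each node, and global rigidity follows.

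\textbf{NPT with \(\dim S_-=1\).} Let \(v\) span \(S_-\). If \(\operatorname{SR}(v)=3\), then \(J(\delta)\) is a pure Schmidt-rank-three state that is converted reversibly by the conjugate protocol. The equality case of pure-state entropy monotonicity then forces each local effect to be scalar. If \(\operatorname{SR}(v)=2\), Lemma~\ref{lem:residual} applied to \(\Omega\) gives the result directly.

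\textbf{PPT.} A PPT CES state of full local rank must have rank four, because rank at most three PPT states are separable. Rank-four PPT entangled two-qutrit states are known to be LU-equivalent to states built from unextendible product bases~\cite{PPT4,LowRank}. I would use that their range is spanned by a finite rigid family of product vectors orthogonal to a UPB. Each Kraus operator must permute the kernel UPB up to scaling. A UPB of five vectors in general position then forces the local factors to be scaled unitaries via the trace inequality~\eqref{eq:tracenine}.

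\textbf{Main obstacle.} I expect the hardest step to be establishing the first condition of Lemma~\ref{lem:joint} from the contact equations alone, i.e.\ showing that \(D\) compresses to zero on the whole support and not just on each pure component. I would try summing Eq.~\eqref{eq:contact} over the optimal decomposition, and then using that the decomposition can be rotated unitarily among optimal ones to reach cross terms. A secondary difficulty is the PPT rank-four case: the UPB classification has to be invoked carefully so that it yields product-unitary Kraus operators rather than merely invariance of the range.
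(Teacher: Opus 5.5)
Your skeleton (rigidity of the round-trip channel, transfer back via Eq.~\eqref{eq:factorback}, conclusion via Eq.~\eqref{eq:variance}) and your two $k=1$ subcases match the paper. The genuine gap is the case $\dim S_-\ge2$, which is exactly the obstacle you flag.

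You apply Lemma~\ref{lem:joint} to $\delta$ itself, which needs $P_{\supp\delta}(D\otimes I)P_{\supp\delta}=0$. The contact equation~\eqref{eq:contact} only controls the components of an optimal decomposition. It gives $\bra{\psi_j}(D\otimes I)\ket{\psi_j}=0$, but the cross terms $\bra{\psi_j}(D\otimes I)\ket{\psi_k}$ are unconstrained, and summing yields only $\Tr[(D\otimes I)\delta]=0$. Rotating the decomposition does not help either: a unitary remixing of an optimal $E_F$ decomposition is in general not optimal, so the contact equation does not apply to the rotated vectors.

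The missing idea is to apply the lemma to the auxiliary state $\eta=J(\delta)$, whose support is $S_-$.
\begin{itemize}
\item With $D^T$ in place of $D$, the first condition is literally $P_-XP_-=0$ from the kernel bridge~\eqref{eq:kernelbridge}.
\item The second condition, $P_+(\eta^\Gamma)(D\otimes I)P_-(\eta^\Gamma)=0$, follows from the invariance of the negativity of $J(\delta)$ under the conjugate protocol, Eq.~\eqref{eq:Jordantransport}. Your $P_+XP_-=0$ involves the spectral projectors of $\delta^\Gamma$, not of $\eta^\Gamma$, so it is not this condition.
\item Since the lemma requires $\eta$ to have local rank three on Alice's side, this route also produces a subcase your plan never meets. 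If $r_A(J(\delta))=2$ with support $U$, the $E_F$ contact equation for $J(\delta)$ under the conjugate protocol, together with interlacing, gives $(D^T\otimes I)S_-\subset U^\perp\otimes\mathbb C^3$. This map is injective on the CES $S_-$, so for $k\ge2$ its image is at least two-dimensional. That image lies in $\ker\delta^\Gamma\cap(U^\perp\otimes\mathbb C^3)$, contradicting Lemma~\ref{lem:fiber}.
\end{itemize}

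The remaining cases are also thinner than stated.
\begin{itemize}
\item \textbf{$2\times3$ rank-two case.} Every such CES support has the same single-$L_2$ Kronecker form up to invertible local maps. The form therefore carries no LU information and constrains no Kraus operator. In the paper it serves only to show $\det\delta_0^\Gamma=(ab-|c|^2)^3>0$, i.e.\ that $\delta^\Gamma$ is invertible on the six-dimensional space. The Bob-side kernel bridge then forces $(I\otimes D)S_-=0$, which is impossible because $\mathbb C^2\otimes\ker D$ consists of product vectors.
\item \textbf{Local ranks.} Your claim that local ranks are invariant under reversible conversion is false in general: Case~3 of Theorem~\ref{thm:classification} can change Bob's local rank through $\tau_B$. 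For cores it is a consequence of what you are proving. The paper instead starts from whichever state has full local rank. Otherwise it shows each core lies in the convex hull of the other's LU orbit, which also rules out pairing a pure core with a mixed one; Nielsen's theorem needs both states pure.
\item \textbf{PPT case.} The range is a CES, so the product vectors live in the kernel, not the range. The UPB description holds only up to invertible local maps, not LU. The trace inequality~\eqref{eq:tracenine} also requires $K_i\delta K_i^\dagger=p_i\delta$ termwise to obtain $p_i=|\det K_i|^{2/9}$. The paper derives this from the extreme-ray property of the PPT cone before invoking the finite projective local stabilizer.
\end{itemize}
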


\subsection{Possible core ranks}

The separable weight of a state in the best separable approximation is zero
if and only if its range contains no nonzero product vector, i.e., its range is a CES.
If a product vector exists in the range,
one can subtract a sufficiently small positive multiple of the projector onto this vector from the state
so that the remainder is still a positive operator.
Conversely, if there exists a nonzero separable operator \(S\) satisfying \(0\le S\le\delta\),
then the range of \(\delta\) necessarily contains a product vector.
By the dimension bound for CESs~\cite{CES}, we have
\begin{equation}
\rank\delta\le(r_A-1)(r_B-1).
\label{eq:CESbound}
\end{equation}
Combined with the relevant results on low-rank PPT states~\cite{PPTcriterion,LowRank},
this shows that the core state has only four possibilities:
a pure state of Schmidt rank two in a \(2\times2\) space;
a rank-two NPT state in a \(2\times3\) or \(3\times2\) space;
a PPT state with local rank three on both sides and global rank four;
and an NPT state with local rank three on both sides and global rank at most four.

\subsection{Rank-two qubit--qutrit cores}

We first prove that for a rank-two CES state in a \(2\times3\) space,
its partial transpose is invertible on the corresponding six-dimensional space.
Take the two vectors spanning its support;
their coefficient matrices form a matrix pencil of constant rank two.
In the Kronecker decomposition~\cite{Pencils}, only one \(L_2\) block can occur:
a regular block would drop rank at some projective parameter;
shorter singular blocks would either produce a rank-one matrix
or make the dimensions of the local supports insufficient.
Therefore, after an invertible local transformation,
a positive operator supported on this subspace can be written as
\begin{equation}
\delta_0=RGR^\dagger,\quad
R=(\ket{00}+\ket{11},\,\ket{01}+\ket{12}),\quad
G=\begin{pmatrix}a&c\\d&b\end{pmatrix}>0,\quad d=\bar c.
\label{eq:pencilnormal}
\end{equation}
A direct computation gives
\begin{equation}
\delta_0^\Gamma=
\begin{pmatrix}
a&c&0&0&0&0\\
d&b&0&a&c&0\\
0&0&0&d&b&0\\
0&a&c&0&0&0\\
0&d&b&0&a&c\\
0&0&0&0&d&b
\end{pmatrix},\qquad
\det\delta_0^\Gamma=(ab-cd)^3>0 .
\label{eq:pencildeterminant}
\end{equation}
When \(G=I\), \(\delta_0^\Gamma\) has four positive eigenvalues and two negative eigenvalues.
Since the cone of positive definite matrices is connected
and the above determinant is always nonzero, the numbers of positive and negative eigenvalues cannot change.
Invertible local transformations correspond to invertible congruence transformations of the partial transpose,
so the partial transpose of the original core state is also invertible on the corresponding six-dimensional space.
Here invertible local transformations are used only to determine this spectral property;
local unitary equivalence is deduced from the equality conditions below.

By Eq.~\eqref{eq:contact},
the effect operator on the qubit side must be proportional to the identity operator on its two-dimensional support.
If the effect operator on the three-level side is not proportional to the identity operator,
then, by the counterpart of Eq.~\eqref{eq:kernelbridge} on Bob's side,
the operator \(I\otimes D\) must map the negative spectral subspace into the kernel of the partial transpose.
However, the partial transpose is invertible on this six-dimensional space,
so \(I\otimes D\) must map the entire negative spectral subspace to zero.
Since \(\ker D\) is only one-dimensional,
every nonzero vector in \(\mathbb C^2\otimes\ker D\) is a product vector,
and this space cannot contain any nonzero vector from the negative spectral CES subspace.
This rules out the above possibility.
Therefore, in every finite-prefix branch with positive probability,
the core state undergoes only local isometries,
and these isometries can be extended to local unitary transformations on the entire three-level space.

Next we determine the action of the given channel on the entire product subspace.
Using the local unitary equivalence proved below, we align the core states
so that the channel leaves the core state on \(T=U\otimes\mathbb C^3\) invariant.
In any finite prefix, the product Kraus operators of positive-probability branches
have the following properties: the Bob factor is proportional to a unitary operator,
and the Alice factor is proportional to an isometry on \(U\).
For Kraus terms that are themselves nonzero but have zero probability on the core state,
we must have \(AP_U=0\).
Indeed, before the first zero-probability outcome occurs,
the core state undergoes only local isometries;
and since the local support on Bob's side has full rank,
a nonzero Bob effect operator cannot yield a zero-probability outcome.

Normalize the Hilbert--Schmidt norms of the nonzero local factors to one,
and define the compact set \(\mathcal K=\mathcal K_+\cup\mathcal K_0\):
\begin{align}
\mathcal K_+&=\{A\otimes B:\|A\|_{\HS}=\|B\|_{\HS}=1,\
B^\dagger B=I_3/3,\ P_UA^\dagger AP_U=aP_U,\ a\ge0\},\nonumber\\
\mathcal K_0&=\{A\otimes B:\|A\|_{\HS}=\|B\|_{\HS}=1,\ AP_U=0\}.
\label{eq:compactK}
\end{align}
Let \(\mathcal C(\Lambda_n)\) denote the unnormalized Choi operator;
then \(\mathcal C(\Lambda_n)/9\) belongs to
the compact convex hull of
\(\{\ket{K}\!\rangle\langle\!\bra{K}:K\in\mathcal K\}\).
By channel convergence and Carath\'eodory's theorem,
the limit channel also has a finite product Kraus representation satisfying the same conditions.

The limit channel leaves the core state invariant,
and each Kraus output term is a positive operator,
so the support of every nonzero output term must lie in the range of the core state.
Its image space on Alice's side can therefore only be \(U\).
Hence, the positive-weight terms are all scalar multiples of product unitary operators on \(T\),
and the sum of their weights is one.
Zero-weight terms vanish identically on \(T\),
but may still have nonzero action on the exterior space,
and hence cannot be directly deleted from the channel.

\subsection{Full-local-rank PPT cores}

For rank-four PPT entangled two-qutrit states,
we use the following two structural properties:
the state generates an extreme ray of the PPT cone
[Ref.~\cite{LowRank}, Theorem~3(iv) and Definition~9],
and its projective local stabilizer group is finite
[Ref.~\cite{PPT4}, Proposition~27].
Let a separable channel leave this state invariant.
Since the operator obtained after the action of each product Kraus operator still satisfies the PPT condition,
the extreme-ray property implies that every output term must be proportional to the original state, i.e.,
\begin{equation}
K_i\delta K_i^\dagger=p_i\delta,\qquad
p_i\ge0,\qquad\sum_i p_i=1.
\label{eq:PPTextreme}
\end{equation}
When \(p_i>0\), since both local ranks of \(\delta\) are full,
\(A_i\) and \(B_i\) must be invertible.
The finiteness of the projective local stabilizer group further implies that
some positive integer power of each local factor is proportional to the identity operator.
Applying Eq.~\eqref{eq:PPTextreme} repeatedly
and then taking determinants of the local factors, we obtain
\begin{equation}
p_i=|\det A_i\det B_i|^{2/3}=|\det K_i|^{2/9}.
\label{eq:PPTdet}
\end{equation}
Thus, Eq.~\eqref{eq:tracenine} gives global rigidity,
and at the same time guarantees that the zero-weight terms vanish identically on the entire space.
Using Eq.~\eqref{eq:factorback},
this conclusion can then be transferred separately to the two given conversion channels.

\subsection{A kernel bound with one local factor fixed}

Even if an NPT core state has full local rank, its partial transpose may still be singular.
Below we prove that, after fixing the local vector on one side,
the dimension of the kernel of the partial transpose in the corresponding subspace is at most one.
When the dimension of the negative spectral subspace is at least two,
combined with Eq.~\eqref{eq:kernelbridge},
this conclusion rules out local effect operators that are not proportional to the identity operator.

\begin{smlemma}\label{lem:fiber}
Let a positive operator \(\omega\) have range a CES and full local rank.
Then for any nonzero vector \(a\),
\begin{equation}
\dim\bigl(\ker\omega^\Gamma\cap(a\otimes\mathbb C^3)\bigr)\le1 .
\label{eq:fiber}
\end{equation}
The same dimension upper bound holds when the local vector on Bob's side is fixed.
\end{smlemma}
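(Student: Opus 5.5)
The plan is to turn the kernel condition on \(\omega^\Gamma\) into a statement about the range of \(\omega\) itself, and then use the CES property together with full local rank on Alice's side. Write \(\mathcal R=\ran\omega\) and \(r=\rank\omega\). Since \(\omega\) is Hermitian, \(\omega\) restricted to \(\mathcal R\) is invertible, and \(\omega\) vanishes on \(\mathcal R^\perp\). Define the linear map
\[
\varphi:\mathcal R\to\mathbb C^3_B,\qquad \varphi(v)=(\bra{\bar a}\otimes I)v .
\]
Its kernel is \(\mathcal R\cap(\bar a^\perp\otimes\mathbb C^3)\).

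\textbf{Step 1 (characterize the fiber).} A vector \(a\otimes y\) lies in \(\ker\omega^\Gamma\) if and only if \(\bra{a'\,y'}\omega^\Gamma\ket{a\,y}=0\) for all \(a',y'\). With \(\Gamma=T_A\), this matrix element equals \(\bra{\bar a\,y'}\omega\ket{\bar a'\,y}\). Hence the condition is
\[
(\bra{\bar a}\otimes I)\,\omega\,(b\otimes y)=0\quad\text{for all } b\in\mathbb C^3_A .
\]
Equivalently, \(\omega(\mathbb C^3\otimes y)\subset\ker\varphi\).

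\textbf{Step 2 (assume a two-dimensional fiber).} Suppose, for contradiction, that the fiber contains two independent vectors \(a\otimes y_1\) and \(a\otimes y_2\). Let \(Y=\Span\{y_1,y_2\}\). By Step 1 and linearity, \(\omega(\mathbb C^3\otimes Y)\subset\ker\varphi\). Because \(\omega\) is invertible on \(\mathcal R\), we have \(\omega(\mathbb C^3\otimes Y)=\omega\bigl(P_{\mathcal R}(\mathbb C^3\otimes Y)\bigr)\), and its dimension equals \(\dim P_{\mathcal R}(\mathbb C^3\otimes Y)\).

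\textbf{Step 3 (the projection is all of \(\mathcal R\)).} The orthogonal complement of \(P_{\mathcal R}(\mathbb C^3\otimes Y)\) inside \(\mathcal R\) is \(\mathcal R\cap(\mathbb C^3\otimes Y)^\perp=\mathcal R\cap(\mathbb C^3\otimes Y^\perp)\). Since \(Y^\perp\) is one-dimensional in \(\mathbb C^3_B\), every vector of \(\mathbb C^3\otimes Y^\perp\) is a product vector. The CES property therefore forces this intersection to be zero, so \(P_{\mathcal R}(\mathbb C^3\otimes Y)=\mathcal R\).

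\textbf{Step 4 (contradiction with full local rank).} Steps 2 and 3 give \(\mathcal R=\omega(\mathcal R)\subset\ker\varphi\), that is, \(\mathcal R\subset\bar a^\perp\otimes\mathbb C^3\). Then \(\Tr_B P_{\mathcal R}\) is supported on the two-dimensional space \(\bar a^\perp\), which contradicts local rank three on Alice's side. This proves Eq.~\eqref{eq:fiber}.

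For Bob's side the argument is the same after exchanging roles. The condition for \(x\otimes b\in\ker\omega^\Gamma\) becomes \((I\otimes\bra b)\,\omega\,(\bar Y\otimes\mathbb C^3)=0\), where the partial transpose conjugates the Alice vectors. The product-vector argument of Step 3 and Bob's full local rank in Step 4 then apply unchanged.

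I expect the main point needing care to be Step 1, namely the index bookkeeping under \(\Gamma=T_A\): one must get the conjugations \(\bar a\), \(\bar a'\) right so that the full kernel condition, and not merely the compression \(\bra{\bar a}\omega\ket{\bar a}\), is what is used. The compression alone does not suffice, since it can have rank one; the argument genuinely needs the off-diagonal condition for all \(b\), which is what produces \(\omega(\mathbb C^3\otimes Y)\) in Step 2. The remaining steps only use invertibility of \(\omega\) on its range and the absence of product vectors in \(\mathcal R\).
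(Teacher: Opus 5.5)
Your proof is correct, but it reaches the contradiction by a route dual to the paper's. The paper fixes $a=e_2$ and uses positivity of $\omega$ to upgrade $e_2\otimes W\subset\ker\omega^\Gamma$ to the extra condition $\omega(e_2\otimes W)=0$. From the two conditions it reads off that every block has the form $\omega_{i2}=c_i\proj z$, where $W^\perp=\mathbb C z$. It then exhibits the explicit vector $\omega\ket{2,z}=(\sum_i c_i\ket i)\otimes z$ in the range, which is nonzero because full local rank forces $c_2>0$. So in the paper the contradiction is with the CES property.

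You use only the $\Gamma$-kernel condition, in the form that the row block $(\bra{\bar a}\otimes I)\omega$ annihilates $\mathbb C^3\otimes Y$. You apply the CES property to $\mathbb C^3\otimes Y^\perp$ (which is the paper's $\mathbb C^3\otimes z$) to get $\omega(\mathbb C^3\otimes Y)=\mathcal R$. You then conclude $\mathcal R\subset\bar a^\perp\otimes\mathbb C^3$, so your contradiction is with full local rank instead.

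The paper's version is a short block computation with a concrete product-vector witness. Yours is basis-free and needs only that $\omega$ be Hermitian with CES range: positivity and the derived kernel $\omega(\bar a\otimes W)=0$ are never used. It also makes the conjugation bookkeeping for the Bob-side case explicit, where the paper leaves that case to ``the same argument.''
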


\begin{proof}
Choose a suitable local basis and handle the conjugation according to the partial transpose convention
so that \(a=e_2\).
Suppose there exists a two-dimensional subspace \(W\) such that
\(e_2\otimes W\subset\ker\omega^\Gamma\).
From the expectation-value relation of the partial transpose on product vectors
and the positivity of \(\omega\),
we obtain \(\omega(e_2\otimes W)=0\).
Writing \(\omega\) as a block matrix according to Alice's basis,
the above two kernel conditions give
\(\omega_{i2}W=\omega_{2i}W=0\).
Since \(\omega_{2i}=\omega_{i2}^\dagger\),
both the left and right supports of \(\omega_{i2}\) lie in the one-dimensional space
\(W^\perp=\mathbb Cz\),
so it can be written as \(\omega_{i2}=c_i\proj z\).
Full local rank on Alice's side guarantees \(c_2>0\).
Therefore,
\(\omega\ket{2,z}=(\sum_i c_i\ket i)\otimes z\)
is a nonzero product vector in the range of \(\omega\),
contradicting the assumption that the range is a CES.
\end{proof}

\subsection{Full-local-rank NPT cores}

Let \(\eta=J(\delta)\) and \(k=\dim S_-(\delta^\Gamma)\).
Suppose that in Alice's operation there appears the first effect operator that is not proportional to the identity operator.
From the contact equation, one can construct an operator \(D\) with inertia \((-,0,+)\),
and it satisfies Eq.~\eqref{eq:kernelbridge}.
Below we discuss separately according to the local rank of the auxiliary state on Alice's side.

If \(r_A(\eta)=3\),
then Eq.~\eqref{eq:kernelbridge} shows that
\(\eta\) and \(D^T\) satisfy the first condition of Lemma~\ref{lem:joint}.
The invariance of the negativity of the auxiliary state under the conjugate protocol gives the second condition of that lemma.
However, these two conditions cannot hold simultaneously, yielding a contradiction.

If \(r_A(\eta)=2\),
then by the equality condition for the entanglement of formation of the auxiliary state,
on its two-dimensional local support \(U\) we have \(P_UF^TP_U=q'P_U\).
Since the range of \(\eta\) contains no product vector,
the Schmidt support on Alice's side of every pure-state component in its optimal decomposition must be \(U\).
The eigenvalue interlacing theorem further requires that \(q'\) equal the middle eigenvalue of \(F^T\),
namely \(q\), and hence
\begin{equation}
(D^T\otimes I)S_-\subset U^\perp\otimes\mathbb C^3 .
\label{eq:fibermap}
\end{equation}
\(\ker D^T\) is only one-dimensional,
so every nonzero vector in \(\ker D^T\otimes\mathbb C^3\) is a product vector.
Since \(S_-\) is a CES, the above map must be injective on \(S_-\).
When \(k\ge2\), the dimension of its image is also at least two.
Combining Eqs.~\eqref{eq:kernelbridge} and~\eqref{eq:fibermap},
this contradicts the dimension upper bound given by Lemma~\ref{lem:fiber}.
The local rank of the auxiliary state also cannot be one, since otherwise its range would contain a product vector.

For Bob's side, it suffices to adjust the treatment of the transpose accordingly, and the argument is completely the same.
Therefore, when \(k\ge2\), the local effect operators on both sides must be proportional to the identity operator.
When \(k=1\), if the negative spectral subspace is spanned by a vector of Schmidt rank three,
then the auxiliary state is pure, and the same conclusion follows from the equality condition for its entanglement of formation.
If the Schmidt rank of this vector is two,
then apply Lemma~\ref{lem:residual} to the separable channel \(\Psi\Lambda\) that leaves \(\delta\) invariant,
and then use Eq.~\eqref{eq:factorback} to transfer the conclusion to the two conversion channels.
Here we directly use the fact that the round-trip channel leaves \(\delta\) invariant,
without assuming that some conditional branch can be reversed separately.

At this point, all cases have been covered.
When \(k\ge2\), the local rank of the auxiliary state on both sides can only be two or three;
when \(k=1\), the Schmidt rank of the negative eigenvector can only be two or three.
What is discussed here is the Schmidt rank of the negative eigenvector,
which should not be identified with the Schmidt number of \(\delta\).
The above argument always allows \(\delta^\Gamma\) to be a singular operator.

\subsection{Endpoints and actual infinite protocols}

In the cases where the above local rigidity holds,
every finite-prefix branch state with positive probability is locally unitary equivalent to the initial state.
Since the convex hull of an LU orbit in a finite-dimensional space is compact,
the limiting target state also belongs to this convex hull,
and can therefore be written as a finite mixture of states LU equivalent to the initial state.
Denote the channel realizing this mixture by \(\Phi\).

If the initial state has full local rank,
then every nonzero local effect operator on the entire three-level space
is proportional to the identity operator.
At the same time, from \(F\ge0\), \(\delta_A>0\), and \(\Tr(F\delta_A)=0\),
we obtain \(F=0\); the same conclusion holds on Bob's side.
Therefore, the local operators corresponding to zero-probability outcomes must be zero,
and leave no unconstrained action outside the range of the core state.
By induction step by step, we obtain that
every finite-prefix channel is a mixed local unitary channel on the entire input space.
The convex hull of the Choi operators of product unitary channels is compact,
so the limiting channel still lies in it.
This proves that the given forward channel is a mixed local unitary channel.

To handle the given reverse channel,
first realize the finite mixture \(\Phi\) by shared randomness,
and then execute the original reverse protocol, with the subsequent operations not using this random label.
The resulting protocol proceeds along the same fixed protocol tree,
and its average final output is exactly the initial state.
By Eq.~\eqref{eq:Jordantransport},
the corresponding conjugate protocol also leaves \(J(\delta)\) invariant.
Therefore, both the original state and the auxiliary state satisfy the finite-prefix equality conditions.
After applying \(\Phi\), every conditional state is locally unitary equivalent to \(\delta\).
By induction step by step along the reverse protocol tree,
we find that every nonzero local effect operator must be proportional to the identity operator.
Since the branch states always have full local rank,
any subsequent nonzero local effect operator corresponds to a positive-probability outcome.
Therefore, this argument determines the structure of the given reverse channel on the entire input space.
By the compactness of the set of Choi operators of mixed local unitary channels,
the limiting reverse channel is still a mixed local unitary channel.
The above construction consists of a finite preparation followed by a countable protocol tree,
which conforms to the protocol convention of this paper,
and does not involve concatenating two already completed infinite protocols.

For the cases handled through the fixed-point property of separable channels,
it suffices to apply Eq.~\eqref{eq:factorback} directly.
Then Eq.~\eqref{eq:variance} yields the LU equivalence of the two endpoint states.
As long as one of the states has full local rank, the argument can start from that state,
without the need to determine the local rank of the other state in advance.

If neither of the two mixed core states has full local rank,
then, by the argument for the qubit--qutrit case,
they respectively belong to the convex hulls of each other's LU orbits,
and Eq.~\eqref{eq:variance} can still be used to obtain LU equivalence.
However, the convex hull of the LU orbit of a rank-two state contains no pure state,
so a mixed core state cannot be mutually convertible with a pure core state.

Finally, if both core states are pure states of Schmidt rank two,
then their binary Schmidt entropies are equal.
Within the interval determined after ordering the Schmidt coefficients,
the binary entropy is strictly monotonic, so the two sets of Schmidt coefficients are identical,
and the two states are locally unitary equivalent.
For finite protocols, this is also a direct consequence of Nielsen's theorem
~\cite{Nielsen}.
Thus, Proposition~\ref{prop:core} is proved.

\section{From core rigidity to local flags}
\label{sec:shell}

\subsection{A compatible optimal decomposition}

To apply the rigidity of core states to the original states,
one needs to choose a best separable approximation compatible with the given conversion channels,
so that the two residual states can also be mutually converted by these two channels.
An arbitrary choice of a pair of optimal decompositions does not guarantee this.
Define the maximum separable weight and the corresponding set of optimal solutions as
\begin{align}
b(\rho)&=\max\{\Tr S:0\le S\le\rho,\ S\in\SEP\},\nonumber\\
\mathcal F_\rho&=\{S:0\le S\le\rho,\ S\in\SEP,\ \Tr S=b(\rho)\}.
\label{eq:BSAset}
\end{align}
\(\mathcal F_\rho\) is a nonempty compact convex set.
A separable channel maps any feasible operator \(S\) of the initial state
to a feasible operator of the target state without changing its trace.
Therefore, if \(\rho\) and \(\sigma\) are mutually interconvertible,
then \(b(\rho)=b(\sigma)=b\)~\cite{BSA}.

The round-trip channel \(\Omega=\Psi\Lambda\) maps \(\mathcal F_\rho\) into itself.
Choose any \(S_0\in\mathcal F_\rho\) and construct the averaged sequence
\begin{equation}
\bar S_n=\frac1n\sum_{k=0}^{n-1}\Omega^k(S_0),\qquad
\|\Omega(\bar S_n)-\bar S_n\|_1\le\frac{2b}{n}.
\label{eq:Cesaro}
\end{equation}
By compactness, one can extract a convergent subsequence.
The above inequality also guarantees that its limit \(S_*\) satisfies \(\Omega(S_*)=S_*\).
If \(\rho\) is entangled, then \(t=1-b>0\). Let
\begin{equation}
S'=\Lambda(S_*),\qquad
\delta=\frac{\rho-S_*}{t},\qquad
\delta'=\frac{\sigma-S'}{t}.
\label{eq:compatible}
\end{equation}
The ranges of both residual states are CESs.
Otherwise, one could subtract a nonzero separable positive operator from the residual part,
thereby increasing the separable weight, contradicting optimality.
By the above construction, the given channels satisfy
\begin{equation}
\Lambda(\delta)=\delta',\quad\Psi(\delta')=\delta,\qquad
\Lambda(S_*)=S',\quad\Psi(S')=S_*.
\label{eq:fourtransport}
\end{equation}
Therefore, there exists at least one pair of optimal decompositions compatible with the given channels.
Other optimal decompositions need not be compatible, and uniqueness of the decomposition need not be assumed.
Here only the composition of the two channels is used,
and there is no need to actually concatenate two already completed infinite protocols.

By Proposition~\ref{prop:core}, we have \(\delta'=W\delta W^\dagger\),
where \(W\) is a product unitary operator.
Rewriting the target state as \(W^\dagger\sigma W\),
and simultaneously changing the forward and reverse channels to
\(W^\dagger\Lambda(\,\cdot\,)W\) and
\(\Psi(W\,\cdot\,W^\dagger)\), respectively,
makes both channels leave the same core state \(\delta\) invariant.

\subsection{Weight conservation and no inflow}

Suppose both channels protect the subspace corresponding to the projector \(P\).
That is, each Kraus operator is a scalar multiple of a unitary operator on this subspace,
the sum of the corresponding weights is one, and zero-weight terms vanish identically on this subspace.
Let \(Q=I-P\). Then the Kraus operators of \(\Lambda\) can be written as
\begin{equation}
K_i=\begin{pmatrix}C_i&R_i\\0&D_i\end{pmatrix},
\qquad
\Lambda^\dagger(P)-P=
\begin{pmatrix}0&0\\0&\sum_i R_i^\dagger R_i\end{pmatrix}\ge0 .
\label{eq:inflow}
\end{equation}
This shows that the weight in the protected subspace cannot decrease in either conversion direction.
Since the conversion is reversible, this weight can only remain unchanged,
i.e., \(\Tr P\rho=\Tr P\sigma=w\).
Therefore, every nonnegative contribution from external inflow must vanish:
\(R_iQ\rho Q R_i^\dagger=0\).
Moreover, by the positivity of \(\rho\),
\(\ran(Q\rho P)\subset\ran(Q\rho Q)\),
and hence \(R_iQ\rho P=0\).

Thus, the retained block, the exterior part, and the coherence block between them
are mapped respectively to the corresponding parts in the target state.
If one state has no coherence terms between the protected subspace and its orthogonal complement,
neither does the other.
The action of both channels on the retained block is a mixed local unitary transformation,
so by Eq.~\eqref{eq:variance},
these two positive operators with the same weight must be locally unitary equivalent.

\subsection{Necessity of the four cases}

Separable states can only be mutually interconvertible with separable states,
corresponding to Case~1 of Theorem~\ref{thm:classification}.
For entangled states, we first choose core states compatible with the given channels,
align them by local unitary transformations,
and then discuss separately according to the local supports of the core states.

If the core state has local rank \((3,3)\),
then Proposition~\ref{prop:core} guarantees that the two original channels
are both mixed local unitary channels on the entire space.
Applying Eq.~\eqref{eq:variance} to \(\rho,\sigma\) then yields Case~2.

If the core state is a mixed state of local rank \((2,3)\),
then, by the channel structure results of Sec.~\ref{sec:core},
both channels protect \(T=U\otimes\mathbb C^3\),
and their composite channel leaves \(\rho\) invariant.
When \([\rho,P_T]\ne0\),
Lemma~\ref{lem:extension} extends the protection beyond \(T\),
and Lemma~\ref{lem:productrigidity} then gives global rigidity;
combining this with Eq.~\eqref{eq:factorback} yields Case~2.
When \([\rho,P_T]=0\),
Eq.~\eqref{eq:inflow} shows that the retained blocks of the two states
have the same weight and are locally unitary equivalent.
The exterior part lies in \(U^\perp\otimes\mathbb C^3\);
since Alice's side is only one-dimensional, it must be separable.
This yields Case~3; when the retained weight is one, it belongs to Case~2.
Exchanging the two parties handles the case of local rank \((3,2)\).

If the core state is a pure state \(\proj\psi\) of Schmidt rank two,
then Lemma~\ref{lem:pureprotection} guarantees that
both channels protect its Schmidt product subspace \(L\).
Applying Lemma~\ref{lem:extension} to the composite channel that leaves \(\rho\) invariant,
the protection can be extended to \(E=\Span\{\rho^kL:k\ge0\}\).
If the local rank of \(E\) is \((3,3)\), we obtain Case~2.
If the local rank is \((2,3)\) or \((3,2)\),
then the protection further covers a product subspace that is the full three-dimensional space on one side.
By Eq.~\eqref{eq:subfactorback},
the two given channels also protect this subspace respectively.
Then, following the argument in the previous paragraph,
we obtain Case~2 or Case~3.

The remaining possibility is \(E=L\).
In this case, \(\rho\) has no coherence terms between \(L\) and its orthogonal complement,
i.e., \([\rho,P_L]=0\).
Equation~\eqref{eq:inflow} guarantees that the two retained blocks have the same weight
and are locally unitary equivalent; the two exterior parts are also mutually converted by the original channels.
When the exterior part is separable, we obtain Case~4.
If the exterior part is still entangled, normalize it,
and apply Eqs.~\eqref{eq:BSAset}--\eqref{eq:fourtransport} to this reversible pair of states.
In this way, one can construct compatible CES states \(\epsilon,\epsilon'\) in \(L^\perp\),
which are also mutually converted by the original two channels.
Therefore,
\begin{equation}
\zeta=\tfrac12\proj\psi+\tfrac12\epsilon,\qquad
\zeta'=\tfrac12\proj\psi+\tfrac12\epsilon'
\label{eq:combinedcore}
\end{equation}
are also mutually converted under these two channels.

The ranges of \(\zeta\) and \(\zeta'\) are still CESs.
Taking \(\zeta\) as an example, suppose that
there exists a nonzero product vector in
\(\mathbb C\psi\oplus\ran\epsilon\).
Since \(L\) is a product subspace,
the projection of this vector onto \(L\) is still a product vector or the zero vector.
But this projection must also be proportional to the entangled vector \(\psi\),
and hence can only be zero.
Therefore, the original vector belongs to \(\ran\epsilon\),
contradicting the assumption that the range of \(\epsilon\) is a CES.
The case of \(\zeta'\) is analogous.

Both states also have full local rank.
If \(\epsilon\) does not occupy Alice's third level,
then, since \(\ran\epsilon\subset L^\perp\),
we must have \(\ran\epsilon\subset U\otimes m\).
Every nonzero vector in this subspace is a product vector, so it cannot accommodate a CES.
Therefore, \(\epsilon\) must occupy Alice's third level;
the same holds on Bob's side.
Combining this with the Schmidt supports of \(\psi\) on both sides,
we find that the local rank of \(\zeta\) is \((3,3)\),
and the same conclusion holds for \(\zeta'\).
Thus, applying Proposition~\ref{prop:core} to \(\zeta,\zeta'\),
we conclude that the two original channels are both mixed local unitary channels on the entire space,
which again falls into Case~2.

So far, necessity has been proved.
The separability of the exterior part follows from the above channel argument.
One cannot conclude that an operator remains separable after projection by \(I-P_L\)
merely because the operator itself is separable,
since this projection does not necessarily preserve separability.

\subsection{Sufficiency and the projector formulation}

Sufficiency is given by the following finite protocols.
In Case~1, the input state is discarded,
and the target state is prepared according to its finite product-state decomposition.
In Case~2, the given local unitary transformation is applied directly.
In Case~3, Alice performs the measurement \(\{P_U,\proj n\}\).
If the first outcome is obtained, \(\eta\) is retained with probability \(w\),
and then transformed into \(\eta'\) by a local unitary transformation;
if the other outcome is obtained, \(\proj{n'}\otimes\tau'_B\) is prepared.

In Case~4, Alice first tests whether the system lies in \(U\);
if so, Bob then tests whether his system lies in \(V\).
The branch in which both parties obtain a positive outcome corresponds exactly to the retained block,
and its probability is \(w\).
On this branch, the corresponding local unitary transformation is applied;
on the remaining branches, the two parties use classical communication
to re-prepare \(\tau'\) according to its finite product-state decomposition.
Since the exterior part is re-prepared,
destroying its internal coherence during the measurement does not affect the obtainment of the target state.
Exchanging the initial and target states yields the corresponding reverse protocol.
Moreover, LOCC cannot transform a separable state into an entangled state.
This completes the proof of Theorem~\ref{thm:classification}.

We now supplement the equivalence between the projector conditions and the above classification.
If both states are entangled,
then in Cases~2--4, taking the projectors corresponding to the retained subspaces
that are the entire space, restricted on one side, or restricted on both sides, respectively,
yields Eqs.~\eqref{eq:projector1} and~\eqref{eq:projector2}.
Conversely, if these two projector conditions hold,
then the retained block must be nonzero and entangled.
Let \(w=\Tr R\rho R\).
When \(w=1\), this corresponds to Case~2;
when \(0<w<1\),
a retained projector of local rank \((2,3)\) or \((3,2)\) corresponds to Case~3,
and a retained projector of local rank \((2,2)\) corresponds to Case~4.
The local unitary transformation relating the two entangled retained states
also maps their corresponding two-dimensional local supports onto each other.
Therefore, the retained projector refers to the entire retained state,
and does not exclude the possibility that this state still has coherence
with respect to the support of some chosen pure core state.

\section{The least retained block}\label{sec:canonical}

\subsection{Admissible projections are determined by the negative part}

We now prove that there exists a unique least element among the admissible retained projectors.
For an entangled state \(\rho\), a product projector \(R=P\otimes Q\)
whose ranks on both sides are two or three and which satisfies
\begin{equation}
[\rho,R]=0,\qquad T_R=(I-R)\rho(I-R)\in\SEP,
\label{eq:admissible}
\end{equation}
is called an admissible projector.
In this case, \(\rho\) decomposes into the retained block \(C_R=R\rho R\)
and the separable exterior part \(T_R\), i.e., \(\rho=C_R+T_R\).
Since \(\rho\) is entangled, \(C_R\) must be nonzero and entangled.
The identity projector is clearly an admissible projector.

Let \(\rho\) be an NPT state.
For any admissible projector \(R\),
write the exterior part \(T_R\) as a positive linear combination of product projectors.
Every product vector \(a\otimes b\) therein is orthogonal to the support of \(R\),
so necessarily \(Pa=0\) or \(Qb=0\).
After taking the partial transpose, the corresponding projectors remain positive,
and their supports are orthogonal to the support of \(P^T\otimes Q\).
Therefore,
\begin{equation}
\rho^\Gamma=C_R^\Gamma\oplus T_R^\Gamma,\qquad
(\rho^\Gamma)_-=(C_R^\Gamma)_- .
\label{eq:negativeblock}
\end{equation}
That is, the negative part of \(\rho^\Gamma\) comes entirely from the retained block.
Consequently, the local support projectors in Eq.~\eqref{eq:negativeplanes} satisfy
\(P_A\le P\) and \(P_B\le Q\).
The negative spectral subspace contains no nonzero product vector,
so the ranks of both \(P_A\) and \(P_B\) are at least two.
In a three-dimensional space, however, a subspace containing a given two-dimensional subspace
can only be that subspace itself or the entire space.
Hence, Eq.~\eqref{eq:candidates} covers all admissible projectors.
When computing the local supports, it suffices to know the negative spectral subspace;
its projector can be used in place of the negative part weighted by eigenvalues.

\subsection{Separability of the exterior}

If the retained subspace is restricted on only one side,
then its orthogonal complement has only one dimension on that side, and the exterior part is therefore necessarily separable.
For a \(2\times2\) retained subspace,
the separability of the exterior part is determined by the following lemma:
there must be no coherence terms between the two wings,
but coherence between the wings and the corner may be retained.

\begin{smlemma}\label{lem:exterior}
Let \(P_A,P_B\) be rank-two projectors, and let
\(T\ge0\) be supported on \((P_A\otimes P_B)^\perp\). Then
\begin{equation}
T\in\SEP\quad\Longleftrightarrow\quad
(P_A\otimes P_B^\perp)T(P_A^\perp\otimes P_B)=0.
\label{eq:exteriorzero}
\end{equation}
\end{smlemma}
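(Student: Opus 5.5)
Let \(U=\ran P_A\), \(V=\ran P_B\), with unit vectors \(n\perp U\) and \(m\perp V\). The orthogonal complement of \(U\otimes V\) splits into three pieces: the wing \(W_1=U\otimes m\) of dimension two, the wing \(W_2=n\otimes V\) of dimension two, and the corner \(\mathbb C(nm)\). The off-diagonal block in Eq.~\eqref{eq:exteriorzero} is the coherence between \(W_1\) and \(W_2\). The plan is to prove necessity with a partial-transpose/product-vector argument and sufficiency with an explicit decomposition.

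\emph{Necessity.} If \(T\in\SEP\), write \(T=\sum_k\proj{a_kb_k}\). Every \(a_k\otimes b_k\) is orthogonal to \(U\otimes V\), so \(P_Aa_k=0\) or \(P_Bb_k=0\). In the first case \(a_k\propto n\), and the vector lies in \(n\otimes\mathbb C^3=W_2\oplus\mathbb C(nm)\). In the second case \(b_k\propto m\), and the vector lies in \(\mathbb C^3\otimes m=W_1\oplus\mathbb C(nm)\). A product vector of the first type has no \(W_1\) component, and one of the second type has no \(W_2\) component. Hence each term \(\proj{a_kb_k}\) has zero \(W_1\)–\(W_2\) block, and so does the sum. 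This block is exactly \((P_A\otimes P_B^\perp)T(P_A^\perp\otimes P_B)\).

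\emph{Sufficiency.} Assume the wing–wing block vanishes. In the basis ordered corner, \(W_1\), \(W_2\), we then have
\[
T=\begin{pmatrix} c & g_1^\dagger & g_2^\dagger\\ g_1 & T_1 & 0\\ g_2 & 0 & T_2\end{pmatrix}\ge0 .
\]
Split \(c=c_1+c_2\) with \(c_j\ge g_j^\dagger T_j^{+}g_j\) and \(g_j\in\ran T_j\), so that each piece
\[
T^{(j)}=\begin{pmatrix} c_j & g_j^\dagger\\ g_j & T_j\end{pmatrix}\ge0
\]
is positive on \(\mathbb C(nm)\oplus W_j\). The split is possible because positivity of \(T\) with the zero off-diagonal wing block gives \(c\ge g_1^\dagger T_1^{+}g_1+g_2^\dagger T_2^{+}g_2\) via the generalized Schur complement, together with the range conditions \(g_j\in\ran T_j\). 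Then \(T=T^{(1)}+T^{(2)}\). Now \(T^{(1)}\) is supported on \(\mathbb C(nm)\oplus(U\otimes m)=\mathbb C^3\otimes m\), a subspace in which every vector is a product vector with Bob's factor \(m\); so \(T^{(1)}\) is a positive operator of the form \(\tau_A\otimes\proj m\) and is separable. In the same way \(T^{(2)}\) lives in \(n\otimes\mathbb C^3\) and equals \(\proj n\otimes\tau_B\), which is separable. Hence \(T\in\SEP\).

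The main obstacle is the corner-splitting step. It needs the Schur complement to be handled carefully when \(T_1\) or \(T_2\) is singular, using pseudoinverses and the fact that positivity forces \(g_j\in\ran T_j\). Once the corner weight has been distributed between the two wings, each piece is supported on a product subspace with a one-dimensional factor and is automatically separable.
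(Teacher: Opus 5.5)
Your proposal is correct and follows the paper's proof. Necessity uses the same observation that every product vector in the decomposition lies in $n\otimes\mathbb C^3$ or $\mathbb C^3\otimes m$; sufficiency uses the same generalized Schur complement with pseudoinverses and the range conditions. The only cosmetic difference is that the paper keeps the leftover corner weight $(c-c_A-c_B)\proj{nm}$ as a separate third separable term, while you absorb it into the two wing--corner pieces.
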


\begin{proof}
Ordering the blocks as the two wings and the corner, \(Um,nV,nm\), write \(T\) as
\begin{equation}
T=\begin{pmatrix}A&X&a\\X^\dagger&B&b\\a^\dagger&b^\dagger&c\end{pmatrix}.
\label{eq:exteriormatrix}
\end{equation}
If \(T\) is separable, then every product vector \(x\otimes y\) in its separable decomposition
satisfies \(\|P_Ax\|^2\|P_By\|^2=0\).
That is, either \(x\parallel n\) or \(y\parallel m\).
Therefore, the block between the two wings of every product projector is zero,
and after summing we still have \(X=0\).

Conversely, if \(X=0\), then from \(T\ge0\) we obtain
\begin{equation}
a\in\ran A,\quad b\in\ran B,\qquad
c\ge c_A+c_B,\quad
c_A=a^\dagger A^+a,\quad c_B=b^\dagger B^+b .
\label{eq:exteriorSchur}
\end{equation}
Here, the superscript \(+\) denotes the Moore--Penrose generalized inverse.
Thus, \(T\) can be decomposed as
\begin{equation}
T=
\begin{pmatrix}A&0&a\\0&0&0\\a^\dagger&0&c_A\end{pmatrix}
+\begin{pmatrix}0&0&0\\0&B&b\\0&b^\dagger&c_B\end{pmatrix}
+(c-c_A-c_B)\proj{nm}.
\label{eq:exteriorsplit}
\end{equation}
The three terms on the right-hand side are all positive,
and their supports lie in \(\mathbb C^3\otimes m\),
\(n\otimes\mathbb C^3\), and \(\mathbb C(nm)\), respectively.
Each of these subspaces has local dimension one on at least one side,
so all three terms are separable, and hence \(T\) is separable.
This argument does not require the blocks to be invertible, and also allows them to be zero.
\end{proof}

\subsection{Existence and uniqueness of the least projection}

The candidate projectors in Eq.~\eqref{eq:candidates} commute with one another.
When compared by subspace inclusion, only \(R_A\) and \(R_B\) do not contain each other.
If both satisfy the admissibility conditions,
then \(\rho\) has no coherence terms among the following four product subspaces:
\(P_A\otimes P_B\), \(P_A\otimes P_B^\perp\),
\(P_A^\perp\otimes P_B\), and \(P_A^\perp\otimes P_B^\perp\).
Except for the first subspace, each of the remaining subspaces has dimension one on at least one side,
so the corresponding positive-operator blocks are all separable.
Therefore, \(R_0\) also satisfies the admissibility conditions.
It follows that the intersection of admissible subspaces is still admissible,
and hence there exists a unique least element \(R_*\) among the admissible projectors.

The local supports on both sides of \(C=R_*\rho R_*\) determine \(R_*\) exactly.
Indeed, if the factor of \(R_*\) on Alice's side is \(I\),
while \(C\) occupies only a two-dimensional subspace on that side,
then, by Eq.~\eqref{eq:negativeblock},
this support must correspond to \(P_A\).
Shrinking \(I\) to \(P_A\)
leaves the retained block and the separable exterior part unchanged,
yet yields a smaller admissible projector, contradicting the minimality of \(R_*\).
The same argument applies to Bob's side.

For a PPT entangled state, if there exists an admissible projector smaller than \(I\),
the corresponding product subspace is at most \(2\times3\).
The retained block after projection still satisfies the PPT condition,
and is therefore necessarily separable in this dimension~\cite{PPTcriterion}.
In this way, both the retained block and the exterior part are separable,
and the entire state \(\rho\) is also separable, contradicting entanglement.
Therefore, a PPT entangled state has only the identity projector as an admissible projector,
i.e., \(C(\rho)=\rho\) and \(w_*(\rho)=1\).
This completes the definition for all entangled input states.

\subsection{Proof of the canonical criterion}

Sufficiency is directly guaranteed by the flag protocol.
If \(C(\sigma)=WC(\rho)W^\dagger\),
then, by the preceding conclusion about local supports,
\(W\) also maps the least retained projector \(R_*(\rho)\) to \(R_*(\sigma)\).
Therefore, both Eq.~\eqref{eq:projector1} and Eq.~\eqref{eq:projector2}
hold, and the corresponding finite protocols realize conversion in both the forward and reverse directions.

To prove necessity, take the product unitary operator \(W\) relating the two retained blocks in the projector formulation,
and write the two states as
\(\rho=C_R+T_R\) and \(\sigma=WC_RW^\dagger+T'_{R'}\),
where \(R'=WRW^\dagger\).
If \(D\le R\) is an admissible projector of \(\rho\),
then let \(D'=WDW^\dagger\).
It is easy to verify that \(D'\) commutes with \(\sigma\), and that
\begin{equation}
\sigma-D'\sigma D'
=W\{R(\rho-D\rho D)R\}W^\dagger+T'_{R'}\in\SEP.
\label{eq:minimaltransport}
\end{equation}
Since \(R\) is a local product projector,
the compression map \(R(\,\cdot\,)R\) preserves separability,
so \(D'\) is also an admissible projector of \(\sigma\).
In particular, taking \(D=R_*(\rho)\) gives
\(R_*(\sigma)\le WR_*(\rho)W^\dagger\).
Conversely, repeating the same argument with \(W^\dagger\) yields the opposite inequality.
Thus
\begin{equation}
R_*(\sigma)=WR_*(\rho)W^\dagger,\qquad
C(\sigma)=WC(\rho)W^\dagger.
\label{eq:canonicaltransport}
\end{equation}
This completes the proof of Theorem~\ref{thm:criterion}.

Combined with Case~1 of Theorem~\ref{thm:classification},
this makes it possible to decide whether any two states belong to the same reversible equivalence class:
two separable states are always equivalent;
a separable state and an entangled state are not equivalent;
and two entangled states are decided by comparing the LU equivalence of their respective canonical retained blocks.
In particular, PPT states and NPT states cannot be mutually converted.

\section{Exact procedure and its scope}\label{sec:algorithm}

For an NPT input state, first perform the spectral decomposition of \(\rho^\Gamma\),
take its full negative part, and then use Eq.~\eqref{eq:negativeplanes}
to determine \(P_A,P_B\).
If both have rank two,
then compute \(d_0,d_A,d_B\) in Eq.~\eqref{eq:zerotests}.
By Lemma~\ref{lem:exterior}, the least retained projector is
\begin{equation}
R_*(\rho)=
\begin{cases}
R_0,&d_0=0,\\
R_A,&d_0>0,\ d_A=0,\\
R_B,&d_0>0,\ d_B=0,\\
I,&d_0,d_A,d_B>0.
\end{cases}
\label{eq:decision}
\end{equation}
By the earlier conclusion about the intersection of admissible subspaces,
the middle two cases cannot occur simultaneously.
If the local rank of the negative part is \((2,3)\),
then one only needs to choose between \(R_A\) and \(I\):
take \(R_A\) when \(d_A=0\), and \(I\) otherwise.
When the local rank is \((3,2)\), make the corresponding choice according to \(d_B\);
when it is \((3,3)\), directly take \(I\).
After constructing \(C(\rho)\), repeat this procedure for \(\sigma\),
and then determine whether the two unnormalized retained blocks are locally unitary equivalent.
These steps do not require solving the entanglement of formation (EOF)
or best separable approximation (BSA) optimization problems.

Before solving the LU equivalence equations,
one may first check the traces, ranks, squared Hilbert--Schmidt norms, global spectra, and reduced spectra on both sides of the two retained blocks.
If any one of these quantities differs, equivalence is immediately ruled out;
but agreement of these quantities is generally still not sufficient to establish equivalence.
If the retained block has rank one, then the spectrum of \(\Tr_B C\)
determines both its weight and its Schmidt coefficients,
so equality of the reduced spectra is a necessary and sufficient condition for LU equivalence.
For a \(2\times2\) mixed retained state,
after confirming that the weights are equal,
one can compare using the complete LU invariants of two-qubit states~\cite{Makhlin}.

If the normalized retained state has nondegenerate reduced spectra on both sides,
then after choosing local eigenbases in the same eigenvalue order,
the only remaining degrees of freedom are local phases.
In this case it suffices to check whether there exist phases such that
\begin{equation}
\eta'_{ij,kl}
=e^{i(\alpha_i-\alpha_k+\beta_j-\beta_l)}\eta_{ij,kl}.
\label{eq:phasetest}
\end{equation}
If the reduced spectra have degeneracies,
then arbitrary unitary transformations are still allowed within the degenerate eigenspaces,
and one cannot compare only the phases.
The general LU equivalence decision problem can be written as the problem of solving the following finite system of polynomial equations:
\begin{equation}
U^\dagger U=V^\dagger V=I,\qquad
(U\otimes V)C(\rho)=C(\sigma)(U\otimes V).
\label{eq:LUfeasibility}
\end{equation}
Splitting each matrix element into real and imaginary parts yields a system of real polynomial equations.
Candidate solutions obtained numerically can be checked by substituting back into the equations,
but failure of numerical optimization to find a solution does not amount to a proof that no solution exists.

For two known PPT entangled states,
reversible equivalence is exactly LU equivalence of the full density matrices.
If it has not yet been determined whether the two are separable,
LU equivalence is still a sufficient condition for mutual convertibility.
If LU equivalence does not hold, then they can be mutually converted only when both are separable.
Therefore, in this case one still needs to decide the separability of general two-qutrit states.

When the input data are all algebraic numbers,
the separability decision can be written as whether there exists a convex decomposition into at most 81 product projectors,
and then solved by real algebraic methods.
This paper does not claim that this step or the general LU polynomial system can be solved efficiently.
Likewise, exact zero testing requires exact data;
small residuals obtained from floating-point computation can provide only numerical evidence.
The above criterion addresses whether two states can be mutually converted,
and does not solve the general problem of one-way transformations between mixed states.

\section{Examples and an operational interpretation}
\label{sec:examples}

The following examples illustrate the roles played in the criterion by the weight of the retained block,
the normalized state, and the separability of the exterior part, respectively.
These examples also show that the negative part of the partial transpose cannot by itself determine the reversible equivalence class of the original state.
Let \(\Pi_x=\proj x\), and set
\begin{equation}
\ket{\Phi^\pm}=\frac{\ket{00}\pm\ket{11}}{\sqrt2},\qquad
\ket{\Psi^\pm}=\frac{\ket{01}\pm\ket{10}}{\sqrt2},\qquad
P=\proj0+\proj1 .
\label{eq:Bellnotation}
\end{equation}
In what follows, all these projectors and states are regarded as operators and states in the two-qutrit space.

\subsection{A reversible pair with different global ranks}

First consider the pair of reversible states in the main text.
Their partial transposes have the same negative part:
\begin{equation}
(\rho^\Gamma)_-=(\sigma^\Gamma)_-=\tfrac3{10}\Pi_{\Psi^-}.
\end{equation}
Thus \(P_A=P_B=P\).
Both satisfy \(d_0=0\), and
\(C(\rho)=C(\sigma)=\tfrac35\Pi_{\Phi^+}\).
The forward and reverse protocols are those described for Eq.~\eqref{eq:examplemain} in the main text.
However, the global ranks of the two states are two and four, respectively,
so they cannot be transformed into each other by local unitary transformations.

\subsection{Weights and Schmidt coefficients both matter}

In addition to the above \(\rho\), take also
\begin{equation}
\omega=\tfrac12\Pi_{\Phi^+}+\tfrac12\Pi_{22},\qquad
\xi=\tfrac35\Pi_\chi+\tfrac25\Pi_{22},\qquad
\ket\chi=\sqrt{\tfrac45}\ket{00}+\sqrt{\tfrac15}\ket{11}.
\end{equation}
All three states have the same least retained projector \(R_*=P\otimes P\),
and the retained blocks all have rank one.
The reduced spectra of the unnormalized retained blocks are respectively
\begin{equation}
\rho:\ (3/10,3/10,0),\qquad
\omega:\ (1/4,1/4,0),\qquad
\xi:\ (12/25,3/25,0).
\end{equation}
Since the reduced spectra differ, the three states pairwise do not belong to the same reversible equivalence class.
Comparing \(\rho\) with \(\omega\) reveals the difference in retained weight;
comparing \(\rho\) with \(\xi\) reveals the difference in Schmidt coefficients when the weights are the same.
Therefore, even when the retained block corresponds to a pure state,
its weight and Schmidt coefficients are both information necessary for determining reversible equivalence.

\subsection{The same negative part does not determine the block}

Consider
\begin{equation}
\eta_1=\tfrac35\Pi_{\Phi^+}+\tfrac25\Pi_{\Phi^-},\qquad
\eta_2=\tfrac35\Pi_{\Phi^+}+\tfrac15\Pi_{\Phi^-}
+\tfrac15\Pi_{\Psi^+},\qquad
\rho_i=\tfrac45\eta_i+\tfrac15\Pi_{22}.
\end{equation}
The partial transposes of the two states have the same negative part \(\tfrac2{25}\Pi_{\Psi^-}\),
and both satisfy \(d_0=0\) and \(w_*=4/5\).
However, the nonzero eigenvalues of the two retained blocks are respectively
\begin{equation}
\operatorname{spec}_{\ne0} C(\rho_1)=(12/25,8/25),\qquad
\operatorname{spec}_{\ne0} C(\rho_2)=(12/25,4/25,4/25).
\end{equation}
Since the spectra differ, the two retained blocks cannot be locally unitary equivalent,
and therefore \(\rho_1\) and \(\rho_2\) cannot be mutually converted.
The squared Hilbert--Schmidt norms of their unnormalized retained blocks are \(208/625\) and \(176/625\), respectively,
from which the same conclusion can also be drawn.
Thus, the same negative part and retained weight are still insufficient to determine the reversible equivalence class;
the state inside the retained block must also be included in the comparison.
These two examples also show that the negative part having rank one does not imply that the retained state is pure.

\subsection{The exterior coherence test is necessary}

Let \(\ket s=(\ket{02}+\ket{20})/\sqrt2\), and consider
\begin{equation}
u=\tfrac35\Pi_{\Phi^+}+\tfrac15\Pi_s+\tfrac15\Pi_{22},\qquad
v=\tfrac35\Pi_{\Phi^+}+\tfrac15\Pi_{02}+\tfrac15\Pi_{20}.
\end{equation}
The partial transposes of the two states have the same negative part \(\tfrac3{10}\Pi_{\Psi^-}\).
For \(u\), the block of the partial transpose on \(\{\ket{00},\ket{22}\}\) is
\(\left(\begin{smallmatrix}3/10&1/10\\1/10&1/5\end{smallmatrix}\right)>0\),
and does not produce any additional negative eigenvalue.
Although both states commute with \(R_0=P\otimes P\),
their coherence terms between the two wings are different:
\begin{equation}
X_u=\tfrac1{10}\ket{02}\bra{20}\ne0,\qquad X_v=0 .
\end{equation}
Therefore,
\begin{equation}
(d_0,d_A,d_B)_u=(1/100,1/100,1/100),\qquad
(d_0,d_A,d_B)_v=(0,0,0).
\end{equation}
For \(u\), the entire state must be retained, i.e., \(C(u)=u\), and the retained weight is one;
for \(v\), it suffices to retain \(C(v)=\tfrac35\Pi_{\Phi^+}\),
with weight \(3/5\).
The retained weights of the two states are different, so they cannot be mutually converted.
This example shows that even if a state commutes with \(R_0\),
one cannot conclude from this that the exterior part can be replaced:
coherence between the two wings may still make it entangled.
The second term in \(d_0\) is precisely there to test this.

\subsection{What the canonical weight measures}

Let \(\eta_*=C(\rho)/w_*\).
Let \(E\) be a convex LOCC strong monotone that vanishes on separable states.
Applying convexity to the canonical decomposition gives \(E(\rho)\le w_*E(\eta_*)\);
reading the local flag and then using strong monotonicity gives the reverse inequality.
Therefore,
\begin{equation}
E(\rho)=w_*E(\eta_*).
\label{eq:weightedE}
\end{equation}
This is precisely the application of the flag property~\cite{Flags} to the canonical decomposition.
Since \(E_F(\eta_*)>0\),
this also determines the maximum success probability of exactly obtaining \(\eta_*\) from \(\rho\) via LOCC:
\begin{equation}
p_{\max}(\rho\longrightarrow\eta_*)=w_*.
\label{eq:extraction}
\end{equation}
Measuring the local flag achieves the success probability \(w_*\).
On the other hand, the strong monotonicity of \(E_F\) and Eq.~\eqref{eq:weightedE}
guarantee that the success probability of any protocol cannot exceed \(w_*\).
Here one may merge several measurement outcomes into a success event;
it is only required that the conditional state after merging equals \(\eta_*\),
and it is not required that every success branch output the same state.
\end{document}